\documentclass[11pt]{article}

\usepackage[a4paper,margin=1in]{geometry}
\usepackage{amsmath,amssymb,amsthm}
\usepackage{hyperref}
\usepackage{enumitem}

\hypersetup{
  hidelinks,
  pdftitle={Quantum Coordination Advantages in AI State-Tracking Tasks:
    Semantic Compilation and Latent Memory},
  pdfauthor={}
}

\newtheorem{definition}{Definition}
\newtheorem{proposition}{Proposition}
\newtheorem{theorem}{Theorem}
\newtheorem{lemma}{Lemma}
\newtheorem{corollary}{Corollary}
\newtheorem{remark}{Remark}

\newcommand{\Cost}{\mathsf{Cost}}

\newcommand{\Tr}{\operatorname{Tr}}
\newcommand{\cl}{\mathrm{cl}}
\newcommand{\q}{\mathrm{q}}
\newcommand{\ket}[1]{\lvert #1\rangle}

\title{Quantum Coordination Advantages in AI State-Tracking Tasks:
Semantic Compilation and Latent Memory}
\author{Ming Yang}
\date{\today}

\begin{document}
\maketitle

\begin{abstract}
We prove inference-time quantum coordination advantages for specified AI
state-tracking tasks.  A solver compresses semantic history into a
future-accessible boundary state and later answers a query.  We count
communication \(B\), persistent instance-dependent memory \(M\), and local
work \(D\); classical recurrence, caches, tools, and recomputation are allowed
and charged.

The central result is a boundary-preserving semantic-compilation theorem.  It
maps a finite one-way, streaming, or adaptive causal task into a semantic AI
interface while preserving event order and access to past input.  Classical
boundary-state lower bounds and quantum-memory upper bounds transfer up to
explicit compiler overhead, independently of the finite-precision recurrent
architecture.

Two applications have classical semantics.  Matched-entity synopsis QA
inherits the hidden-matching separation between \(O(\log N)\) qubits and
\(\Omega(\sqrt N)\) classical boundary bits.  Continual requirements auditing
inherits a Max-\(k\)SAT streaming separation: a recurrent solver uses
\(O(\log^5 n\log(1/\delta))\) qubits and polylogarithmic classical workspace
to obtain a \(0.7172\)-approximation, whereas every classical one-pass
finite-information solver attaining that ratio requires
\(\Omega(\sqrt n)\) coordination width.

As a quantum-native compiler test, a stabilizer latent-state dialogue uses
\(n\) qubits, while every exact finite-state classical causal online
realization satisfies
\(B+M\geq \tfrac12n^2+(\tfrac32-\log_2 3)n+O(1)\).  The source protocols,
streaming algorithms, and stabilizer witness are imported; the new result is
their architecture-independent semantic transfer.  These are memory and
coordination separations, not runtime or empirical advantages for present-day
language models.  The stabilizer result assumes exact simulation and ideal
noiseless quantum memory.
\end{abstract}

\section{Introduction}

Generative modeling is often described as the problem of sampling outputs that
look locally plausible under a learned distribution.  For many tasks this is
not enough.  A model must also preserve an implicit state: the identity of an
object in a story, the value of an unobserved variable, the current branch of a
plan, the state of a simulated world, or the latent rule governing future
observations.  We call this requirement \emph{state tracking} or
\emph{latent-state persistence}.

Recent analyses of transformer architectures identify state tracking as a
structural bottleneck~\cite{huang2025lsp,cui2026topological}.  They
also make clear that classical architectures have real repair mechanisms:
add recurrence, add a state-space memory, write a scratchpad, use tools or
external memory, increase reasoning depth, or run a looped/iterative model.
These repairs are important.  They mean that the right theoretical question is
not
\[
  \text{Can a classical model track state at all?}
\]
but rather
\[
  \text{How much communication, memory, and computation does it need?}
\]

This note develops that second question.  The guiding principle is
boundary-relative: once a computational boundary \(\Sigma\) is chosen, an
explicit signal crossing it is communication, a state retained across it is
memory, and local iterative work performed after crossing it is computation
depth.  The same engineering device may therefore play different roles under
different cuts.  A scratchpad token is just part of the input if it is supplied
by the environment, but it is communication if it was written by an earlier
computational event to coordinate a later one.  A recurrent hidden vector is a
memory resource; an external memory lookup may combine communication and
memory; additional latent thinking steps increase local depth.
This accounting follows the general spacetime-separator framework of
Ref.~\cite{yang2026coordination}, which also distinguishes general classical
causal simulation from restricted global-chart covering.

\paragraph{Relation to quantum generative models.}
Quantum generative models have already been shown to have expressive
advantages over several classical generative model
classes~\cite{gao2022generative}.  The goal here is not to restate that claim.
Instead we ask whether quantum models can reduce the coordination cost of
maintaining a latent state.  For ordinary classical latent variables, there is
no reason to expect a universal quantum advantage: a classical recurrent model
can simply store the variable if enough memory is available.  A sharper target
is a \emph{noncommuting latent state}, where the current query selects a
measurement context and no small classical hidden state can consistently answer
all possible queries.  Contextuality, predictive quantum memory, and
stabilizer simulation lower bounds then become candidates for proving
classical coordination costs.

\paragraph{What would count as an advantage?}
The advantage studied here is not a vague increase in sample quality.  It is a
resource separation.  For a family of target processes \(P_n\), one wants
\[
  \log_2 K_{\cl}^{D,\epsilon}(P_n;\Sigma)
  -
  \log_2 K_{\q}^{\epsilon}(P_n;\Sigma)
\]
to grow with \(n\), or equivalently a lower bound on the classical
coordination bits \(B+M\) that is larger than the number of qubits needed by a
quantum recurrent generator.  A useful result must compare against classical
models that are allowed to use the obvious repairs: recurrence, scratchpads,
external memory, and additional computation.  The streaming-derived results
below are specifically space or retained-information separations rather than
time-complexity separations: the classical update map may perform arbitrary
computation within the one-pass access model, while all information that
survives the next online boundary is charged.  In the general coordination
region, local work is still recorded by \(D\); no result here claims a runtime
speedup.

\paragraph{Contributions and status.}
The paper makes three linked contributions.
\begin{enumerate}[leftmargin=2em]
\item It defines an inference-time quantum-AI state-tracking interface and
places classical architectural repairs inside the \(B,M,D\) accounting of
Ref.~\cite{yang2026coordination}, while keeping general causal simulation
separate from restricted chart and ontological baselines.
\item It proves a boundary-preserving semantic-compilation theorem, together
with one-way and online specializations, that transfers classical space lower
bounds and quantum-memory upper bounds to arbitrary finite-precision
recurrent AI solvers.
\item It instantiates the compiler with matched-entity QA, continual
requirements auditing, and contextual latent-state persistence.  The first
two import communication or streaming separations; the third imports the
finite stabilizer witness of Ref.~\cite{yang2026coordination}.  The new claim
is preservation under the AI interface, not a new hidden-matching, streaming,
or stabilizer lower bound.
\end{enumerate}

\section{Boundary-Relative Coordination Cost}

Let \(P\) be a target generative process.  At time \(t\), the past history is
\[
  h_t=(c_1,o_1,\ldots,c_{t-1},o_{t-1}),
\]
the current condition or query is \(c_t\), and the model outputs \(o_t\).  A
computational boundary \(\Sigma_t\) separates the event that has processed
\(h_t\) from the event that must respond to \(c_t\).

\begin{definition}[Coordination resources]
Relative to \(\Sigma_t\), define:
\[
\begin{array}{rcl}
B &:& \text{explicit information crossing the boundary},\\[1mm]
M &:& \text{internal state retained across the boundary},\\[1mm]
D &:& \text{local processing depth after the boundary}.
\end{array}
\]
The unit of \(B\) and \(M\) is the bit.  For a finite hidden-state space
\(\Lambda\), storing a state requires \(\lceil \log_2|\Lambda|\rceil\) bits.
\end{definition}

\begin{definition}[Generative coordination region]
For a model class \(\mathcal A\), error tolerance \(\epsilon\), and boundary
choice \(\Sigma\), define
\[
\begin{aligned}
\Cost_{\mathcal A}^{\epsilon}(P;\Sigma)=\{(B,M,D):{}&
\text{some }\mathcal A\text{-generator simulates }P\\
&\text{within error }\epsilon\text{ using }(B,M,D)\}.
\end{aligned}
\]
The distance of this region from the origin measures the coordination burden
of the process under the chosen boundary.
\end{definition}

\begin{remark}
The boundary is part of the model.  The prompt or context is not automatically
communication.  It counts as communication only when it carries information
from one modeled computational event to another.  This convention prevents the
resource accounting from prejudging whether a long context, a scratchpad, a
KV cache, or a recurrent state is the relevant resource.
\end{remark}

\subsection{General and restricted classical baselines}

The term \emph{classical simulation} has two different strengths that must not
be conflated~\cite{yang2026coordination}.  A \emph{general classical causal
simulator} may use randomized, context-dependent response kernels and adaptive
state updates.  Its complete future-accessible boundary configuration
\(\lambda\) must contain all past-dependent information available after the
cut, but the transition and response rules themselves are unrestricted.
For a static past--future probability table
\[
  F_\Sigma(P)_{u,v}=\Pr_P(U=u,V=v),
\]
the general separator theorem gives
\[
  B+M\geq\log_2\operatorname{rank}_+F_\Sigma(P).
\]
For an online process with Hankel table \(H_P\), the corresponding quantity is
the causal positive-realization rank: one common family of response and update
kernels must realize every history and continuation.  At depth \(D\),
\[
  B+M
  \geq
  \log_2\operatorname{rank}_+^{{\rm causal},D}(H_P).
\]
These are general within the stated causal boundary and access model.

A \emph{restricted chart simulator} instead requires every counted
transcript--memory state to select a context-independent global response chart.
Its covering-number bound
\[
  B+M\geq\log_2\chi_G^D
\]
applies only to that cover-admissible class.  A KWB-compatible stabilizer
simulator is a different restricted sequential model: it imposes exact state
update and single-shot-distinguishability support conditions.  Neither a chart
covering lower bound nor a KWB support-counting lower bound applies to a general
causal simulator without an explicit reduction.  For the exact stabilizer
family used below, Ref.~\cite{yang2026coordination} supplies that reduction by
constructing a finite adaptive witness whose partitioning and distinguishing
tests force the KWB overlap count on every finite-state causal realization.

Accordingly, \(K_{\cl}^{D,\epsilon}\) below is always relative to a named
classical model class.  When a concrete theorem uses an unrestricted one-way or
one-pass baseline, it says so.  When only a chart, HMM, or other restricted lower
bound is available, the restriction is part of the claim and must not be
silently promoted to a bound on arbitrary classical AI.

\section{Classical Latent-State Generators}

A finite classical latent-state generator has hidden state
\(\lambda_t\in\Lambda\).  The past history prepares a distribution over hidden
states,
\[
  w_t(\lambda)=w(\lambda\mid h_t),
\]
and the current query \(c_t\) is answered by a response kernel
\[
  R(o_t\mid c_t,\lambda_t).
\]
Thus
\[
  q(o_t\mid h_t,c_t)
  =
  \sum_{\lambda_t\in\Lambda}
  w(\lambda_t\mid h_t)R(o_t\mid c_t,\lambda_t).
\]
After observing \(c_t,o_t\), the state may update by another stochastic kernel
\[
  U(\lambda_{t+1}\mid \lambda_t,c_t,o_t).
\]

\begin{definition}[Classical state complexity]
Let \(K_{\cl}^{D,\epsilon}(P;\Sigma)\) be the minimum number of classical
hidden coordination states needed by a depth-\(D\) classical generator to
simulate \(P\) within error \(\epsilon\) relative to \(\Sigma\).  The associated
memory cost is
\[
  M_{\cl}^{D,\epsilon}(P;\Sigma)
  =
  \left\lceil \log_2 K_{\cl}^{D,\epsilon}(P;\Sigma)\right\rceil .
\]
\end{definition}

\begin{definition}[Quantum state complexity]
Let \(K_{\q}^{\epsilon}(P;\Sigma)\) be the minimum Hilbert-space dimension
needed by a quantum latent-state generator to simulate \(P\) within error
\(\epsilon\) relative to \(\Sigma\).  The associated quantum memory cost is
\[
  Q^{\epsilon}(P;\Sigma)
  =
  \left\lceil \log_2 K_{\q}^{\epsilon}(P;\Sigma)\right\rceil
\]
qubits.
\end{definition}

The classical model class is an argument suppressed by the notation.  Thus
\(K_{\cl}^{D,\epsilon}\) may denote the general causal class, an HMM class, or
a more restricted simulator class only when that choice is stated locally.

\begin{proposition}[Basic state-count lower bound]
Suppose a simulator with \(B\) bits of explicit boundary communication and
\(M\) bits of retained memory can select among at most \(2^{B+M}\) effective
classical coordination states in a depth-\(D\) model class.  Then exact or
\(\epsilon\)-approximate simulation of \(P\) implies
\[
  B+M
  \geq
  \log_2 K_{\cl}^{D,\epsilon}(P;\Sigma).
\]
\end{proposition}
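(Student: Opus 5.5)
The argument is a counting one: the simulator itself is a witness for the minimization that defines \(K_{\cl}^{D,\epsilon}(P;\Sigma)\). The plan is to extract from the simulator a depth-\(D\) classical latent-state generator in the named model class with at most \(2^{B+M}\) hidden states. Minimality then gives \(K_{\cl}^{D,\epsilon}\le 2^{B+M}\), and taking \(\log_2\) of both sides yields the claim.

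\textbf{Step 1.} Define the hidden state. Given the simulator, set the latent variable at the boundary \(\Sigma_t\) to be the pair \(\lambda=(\beta,\mu)\). Here \(\beta\) is the explicit message crossing \(\Sigma_t\), of at most \(B\) bits, and \(\mu\) is the retained memory, of at most \(M\) bits. By the definition of coordination resources, \(\lambda\) contains all past-dependent information available to the event responding to \(c_t\). Everything else the responding event uses is either the current query \(c_t\), fresh local randomness, or depth-\(D\) local processing.

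\textbf{Step 2.} Read off the three kernels. The history-to-boundary map defines the weight \(w(\lambda\mid h_t)\). The post-boundary depth-\(D\) processing, averaged over its private randomness, defines the response kernel \(R(o_t\mid c_t,\lambda)\). The state update defines \(U\). Since the simulator reproduces \(P\) within \(\epsilon\), the induced mixture \(q(o_t\mid h_t,c_t)=\sum_\lambda w(\lambda\mid h_t)R(o_t\mid c_t,\lambda)\) coincides with the simulator's output distribution, and so it is also within \(\epsilon\) of \(P\).

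\textbf{Step 3.} Count the states. By hypothesis, at most \(2^{B+M}\) effective coordination states are selectable. Effective states that induce identical response and update kernels may be merged without changing \(q\). Hence we obtain a depth-\(D\) generator of the same class with \(|\Lambda|\le 2^{B+M}\), and the definition of \(K_{\cl}^{D,\epsilon}\) as a minimum gives \(K_{\cl}^{D,\epsilon}(P;\Sigma)\le 2^{B+M}\).

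\textbf{Main obstacle.} The delicate point is Step 2's claim that the extracted generator lies in the same model class as the one over which \(K_{\cl}^{D,\epsilon}\) is minimized. For the general causal class this is immediate, because its kernels are unrestricted. For restricted classes (chart or HMM), this closure is exactly what the hypothesis ``in a depth-\(D\) model class'' must supply. I would therefore state explicitly that the simulator is assumed to be a member of that class, so the reduction does not promote a restricted bound to arbitrary classical AI.

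A secondary check concerns \(\epsilon\)-approximate simulation: the error measure must be evaluated on the induced output process itself. Since \(q\) is literally the simulator's output law, no additional error is introduced by the extraction.
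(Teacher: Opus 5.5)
Your proposal is correct and follows the paper's proof: the transcript--memory pair takes at most \(2^{B+M}\) values, each selecting one effective state of the named class, so minimality of \(K_{\cl}^{D,\epsilon}(P;\Sigma)\) forces \(2^{B+M}\ge K_{\cl}^{D,\epsilon}(P;\Sigma)\). Your explicit extraction of \(w,R,U\) and your note on class membership spell out what the paper compresses into its conditional ``if each value determines one effective state of the simulator class.''
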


\begin{proof}
The transcript and retained memory jointly take at most \(2^{B+M}\) values.
If each value determines one effective state of the simulator class, then the
simulator can realize no more than \(2^{B+M}\) effective coordination states.
By minimality of \(K_{\cl}^{D,\epsilon}(P;\Sigma)\), this number must be at
least \(K_{\cl}^{D,\epsilon}(P;\Sigma)\).
\end{proof}

\begin{proposition}[Ordinary latent-state repair]
\label{prop:ordinary-repair}
Suppose \(P\) has a finite classical sufficient state \(z_t\in Z\) such that
\[
  P(o_t\mid h_t,c_t)=P(o_t\mid z_t,c_t)
\]
and \(z_{t+1}\) is sampled from a kernel depending only on
\((z_t,c_t,o_t)\).  Then, for a model class that can implement the required
kernels at depth \(D\),
\[
  K_{\cl}^{D,0}(P;\Sigma)\le |Z|,
  \qquad
  M_{\cl}^{D,0}(P;\Sigma)\le \lceil \log_2|Z|\rceil .
\]
\end{proposition}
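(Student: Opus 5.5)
The proof is a direct construction: we exhibit an explicit classical latent-state generator whose hidden state is the sufficient state itself. Take \(\Lambda=Z\) and let the preparation at time \(t\) be the conditional law \(w(\lambda_t\mid h_t)=\Pr_P(z_t=\lambda_t\mid h_t)\). Take the response kernel \(R(o_t\mid c_t,\lambda_t)=P(o_t\mid z_t=\lambda_t,c_t)\) and the update kernel \(U(\lambda_{t+1}\mid\lambda_t,c_t,o_t)\) to be the given transition kernel of \(z\). By hypothesis, the model class can implement these kernels at depth \(D\), so the generator is admissible.

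\textbf{Exactness.} We argue by induction on \(t\) that the generator's hidden state, conditioned on the realized history \(h_t\), has the same law as \(z_t\) conditioned on \(h_t\). The key step is the marginalization
\[
  q(o_t\mid h_t,c_t)=\sum_{\lambda}w(\lambda\mid h_t)\,P(o_t\mid \lambda,c_t)=P(o_t\mid h_t,c_t),
\]
which uses the sufficiency identity \(P(o_t\mid h_t,c_t)=P(o_t\mid z_t,c_t)\) together with the fact that the query \(c_t\) does not inform \(z_t\) beyond \(h_t\).

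For the inductive step, Bayes-update on \((c_t,o_t)\) and then apply \(U\). Because \(z_{t+1}\) depends only on \((z_t,c_t,o_t)\), the resulting law of \(\lambda_{t+1}\) equals \(\Pr_P(z_{t+1}\mid h_{t+1})\). Multiplying the one-step conditionals along the history then shows that the full sequential joint distribution of outputs coincides with that of \(P\). Hence the error is \(\epsilon=0\).

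\textbf{Counting.} The only information crossing \(\Sigma_t\) is \(\lambda_t\in Z\), so the generator uses \(|Z|\) coordination states. By minimality of \(K_{\cl}^{D,0}\), this gives \(K_{\cl}^{D,0}(P;\Sigma)\le|Z|\). The memory bound \(M_{\cl}^{D,0}\le\lceil\log_2|Z|\rceil\) then follows from the definition of \(M_{\cl}\) and monotonicity of \(\lceil\log_2\cdot\rceil\).

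\textbf{Main obstacle.} The delicate point is the inductive step: the generator's belief over \(\lambda\) must track \(\Pr_P(z_t\mid h_t)\), even though \(z_t\) is a latent variable of \(P\) rather than a function of the history. I would address this by working with the joint process \((z_t,c_t,o_t)\) under \(P\) and coupling the generator's \(\lambda_t\) to \(z_t\) pathwise. Since the generator uses exactly \(P\)'s response and transition kernels, the coupled trajectories agree, and the output marginals are identical. This coupling avoids any explicit Bayes-filter computation. One implicit assumption should be stated explicitly: the initial state \(\lambda_1\) is drawn from the law of \(z_1\).
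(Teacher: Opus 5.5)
Your proposal is correct and follows the paper's proof, which takes the sufficient state itself as the hidden coordination state \(\lambda_t=z_t\) and uses the hypothesized response and update kernels. The paper's proof is a single line and leaves implicit what you spell out: the coupling argument for exactness, the minimality step giving \(K_{\cl}^{D,0}\le|Z|\), and the assumptions that \(\lambda_1\) is drawn from the law of \(z_1\) and that the query \(c_t\) does not inform \(z_t\) beyond \(h_t\).
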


\begin{proof}
Use the sufficient state itself as the hidden coordination state
\(\lambda_t=z_t\).  The response and update kernels are precisely the kernels
given in the hypothesis.
\end{proof}

\begin{remark}
This elementary upper bound is the reason that transformer failures alone do
not prove a quantum advantage.  A recurrent classical model may repair the
failure by spending memory.  The relevant question is whether the required
classical memory, communication, or depth grows asymptotically faster than the
quantum latent state needed for the same process.
\end{remark}

\section{Classical Repairs as Resource Moves}

This section records how common classical repairs to state-tracking failures
fit the resource region.

\begin{enumerate}[leftmargin=2em]
\item \textbf{Recurrence and state-space models.}  Recurrent neural networks,
state-space models, and recurrent transformers carry a hidden state across
time.  In the present accounting, they primarily increase \(M\).
\item \textbf{Scratchpads and chain-of-thought traces.}  When an earlier
generation step writes information for a later step to read, the trace is
explicit boundary traffic.  It is therefore a \(B\)-type resource under that
time cut.
\item \textbf{External memory and tools.}  Tool calls and retrieval systems
mix communication and memory: the main generator sends a query, receives a
message, and may store a summary.
\item \textbf{Latent thinking and iterative inference.}  Extra internal
iterations increase \(D\), the local processing depth available after the
current condition is known.
\item \textbf{Long context.}  Long context is not automatically a cost.  If it
is supplied as part of the task input, it is data.  If it is generated or
compressed by the model to coordinate future events, it is part of the
resource accounting.
\end{enumerate}

Thus classical solutions are not dismissed.  They are the baselines whose
resource costs the theory tries to quantify.

\section{Quantum and Noncommuting Latent States}

A quantum latent-state generator carries a density operator
\[
  \rho_t\in\mathcal D(\mathcal H_t).
\]
For each current condition \(c_t\), the generator uses a POVM
\[
  \{M_{o}^{c_t}\}_o
\]
and outputs
\[
  p(o_t\mid h_t,c_t)
  =
  \Tr(M_{o_t}^{c_t}\rho_t).
\]
The latent state then updates by a quantum instrument,
\[
  \rho_{t+1}
  =
  \mathcal E_{c_t,o_t}(\rho_t).
\]

The important distinction is not merely that \(\rho_t\) is continuous or
random.  It is that different queries \(c_t\) may correspond to noncommuting
measurements.  A classical hidden state can answer all queries by carrying a
large enough table of counterfactual responses.  Contextuality lower bounds
ask how large that table, or the coordination mechanism selecting among
tables, must be.

\begin{definition}[Noncommuting latent-state task]
A latent-state tracking task is noncommuting if there are query contexts
\(c,c'\) whose associated measurements cannot be jointly represented as
coarse-grainings of one fixed classical response variable without increasing
the hidden coordination state space.  In finite measurement scenarios this
condition can be formalized by the absence, or high cost, of a
noncontextual/global-chart representation of the induced empirical model.
\end{definition}

\begin{definition}[Induced one-step empirical model]
Fix a history class \(H\) and a set of allowed queries \(\mathcal C\).  The
one-step empirical model induced by \(P\) is the family
\[
  E_h=\{P(\cdot\mid h,c):c\in\mathcal C\}_{h\in H}.
\]
A classical hidden-state representation with \(K\) states writes every member
as
\[
  P(o\mid h,c)
  \approx
  \sum_{\lambda=1}^{K} w_h(\lambda)R(o\mid c,\lambda).
\]
If the same \(\lambda\) must answer several incompatible queries \(c\), the
problem reduces to finding small classical charts for the query family.
\end{definition}

\begin{proposition}[Contextual coordination lower bound]
\label{prop:contextual-lower-bound}
Let \(P\) induce a finite query model \(E_h\), and fix a classical model
class \(\mathcal C\).  If every depth-\(D\) representation in \(\mathcal C\)
within error \(\epsilon\) needs at least \(K\) effective states, then any
depth-\(D\) implementation in \(\mathcal C\) with boundary resources
\((B,M)\) must satisfy
\[
  B+M\ge \log_2 K .
\]
\end{proposition}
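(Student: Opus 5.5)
The plan is to reduce Proposition~\ref{prop:contextual-lower-bound} to the basic state-count lower bound proved earlier, by showing that any depth-\(D\) implementation in \(\mathcal C\) with boundary resources \((B,M)\) induces a depth-\(D\) representation in \(\mathcal C\) of the query model \(E_h\) whose effective state space has at most \(2^{B+M}\) elements. Once this is in place, the hypothesis says that such a representation needs at least \(K\) effective states. Hence \(2^{B+M}\ge K\), and taking logarithms gives \(B+M\ge\log_2 K\).

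\textbf{Constructing the representation.} Fix an implementation and the cut \(\Sigma\). Everything that can influence the response to the current query \(c\), beyond \(c\) itself and the fixed local rules of depth \(D\), is the pair \(\lambda=(\tau,m)\). Here \(\tau\) is the boundary transcript, with at most \(2^B\) values, and \(m\) is the retained memory, with at most \(2^M\) values. The history \(h\) acts only through the distribution \(w_h(\lambda)\) that it induces on such pairs; any randomness generated before the cut is absorbed into \(w_h\). The response after the cut defines a kernel \(R(o\mid c,\lambda)\). Any fresh local randomness used after the cut is internal to \(R\), so \(R\) is a stochastic kernel that does not depend on \(h\). This gives
\[
  P(o\mid h,c)\approx\sum_{\lambda}w_h(\lambda)R(o\mid c,\lambda),
\]
within error \(\epsilon\), with \(\lambda\) ranging over at most \(2^{B+M}\) values. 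This is exactly the form in the definition of the induced one-step empirical model.

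\textbf{Main obstacle.} The delicate step is checking that this induced \((w_h,R)\) is itself a legitimate member of \(\mathcal C\) at depth \(D\). Only then does the hypothesis ``every representation in \(\mathcal C\) needs \(K\) states'' apply to it. This requires that \(\mathcal C\) be closed under the relabelling that identifies effective states with transcript--memory pairs, and under merging together pairs that give identical response kernels. That matches the assumption in the basic state-count proposition that \((B+M)\) bits select among at most \(2^{B+M}\) effective states of the class. I would state this closure explicitly as the standing interpretation of ``implementation in \(\mathcal C\) with boundary resources \((B,M)\)'' and then invoke the basic state-count proposition directly. For restricted classes, such as global-chart simulators, I would additionally verify that the chart property is inherited by each pair \(\lambda\).
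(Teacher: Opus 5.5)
Your proposal is correct and takes essentially the paper's route: the paper's proof is just the basic state-count bound applied to the induced query model, with each transcript--memory pair selecting one effective response state, so that having fewer than \(K\) such states (i.e.\ \(2^{B+M}<K\)) cannot realize \(E_h\) within error \(\epsilon\). The closure condition you flag as the main obstacle, namely that the induced \((w_h,R)\) is itself a depth-\(D\) representation in \(\mathcal C\), is exactly what the paper assumes through the hypothesis of the basic state-count proposition rather than proves, so stating it explicitly as a standing interpretation is the right move.
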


\begin{proof}
This is the basic state-count lower bound applied to the induced query model.
Each boundary transcript and retained state selects one effective classical
response state for the query family.  Fewer than \(K\) such states cannot
realize the required empirical model within the specified error.
\end{proof}

\begin{remark}
Ordinary natural-language ambiguity or hidden world state is not by itself
evidence for quantum advantage.  The quantum advantage target is narrower:
latent states whose observable queries have a contextual or noncommuting
structure, or classical stochastic processes whose predictive states admit a
provably smaller quantum representation.
\end{remark}

\section{Separation Criterion}

The preceding definitions give a simple separation template.

\begin{theorem}[Coordination-cost separation criterion]
Let \(P_n\) be a family of generative processes and \(\mathcal C_n\) a named
classical simulator class.  Suppose:
\begin{enumerate}[leftmargin=2em]
\item there is a quantum latent-state generator for \(P_n\) with Hilbert-space
dimension \(d_n\), and
\item every depth-\(D_n\) generator in \(\mathcal C_n\) simulating \(P_n\)
within error \(\epsilon_n\) needs at least \(L_n\) hidden coordination states.
\end{enumerate}
Then every such implementation in \(\mathcal C_n\) satisfies
\[
  B+M\ge \log_2 L_n,
\]
whereas the quantum implementation uses at most
\[
  \lceil \log_2 d_n\rceil
\]
qubits of latent memory.  The separation is linear, polynomial, or exponential
according to the growth of
\[
  \log_2 L_n-\log_2 d_n .
\]
\end{theorem}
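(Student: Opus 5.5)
The claim splits into a classical lower bound and a quantum upper bound, each following from an earlier result or a definition.

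For the classical side, I would invoke the basic state-count lower bound, in the form of Proposition~\ref{prop:contextual-lower-bound}. Fix an implementation in \(\mathcal C_n\) at depth \(D_n\) that simulates \(P_n\) within error \(\epsilon_n\) using boundary resources \((B,M)\). The transcript crossing \(\Sigma\) together with the retained memory takes at most \(2^{B+M}\) joint values. Within the named class, each such value selects one effective hidden coordination state, so the implementation realizes at most \(2^{B+M}\) states. Hypothesis~2 says every such generator needs at least \(L_n\) states. Hence \(2^{B+M}\ge L_n\), i.e.\ \(B+M\ge\log_2 L_n\).

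For the quantum side, hypothesis~1 gives a latent-state generator whose states \(\rho_t\) live on a Hilbert space of dimension \(d_n\). Such a space embeds isometrically into the space of \(q=\lceil\log_2 d_n\rceil\) qubits, since \(2^{q}\ge d_n\). The POVMs \(\{M_o^{c}\}\) and the instruments \(\mathcal E_{c,o}\) extend to the larger space by acting as zero (or the identity) on the orthogonal complement. Because the states stay supported on the embedded subspace, all output probabilities \(\Tr(M_o^{c}\rho_t)\) are unchanged. So the quantum implementation uses at most \(\lceil\log_2 d_n\rceil\) qubits; this is essentially the definition of \(Q^{\epsilon}\).

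The growth statement then follows by comparing the two bounds. The classical requirement minus the quantum cost is at least \(\log_2 L_n-\lceil\log_2 d_n\rceil\ge \log_2 L_n-\log_2 d_n-1\), so the separation's asymptotic type is set by the growth of \(\log_2 L_n-\log_2 d_n\).

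The only real obstacle is making sure the state-count step is legitimate. It requires that in \(\mathcal C_n\) the joint transcript--memory value really determines the effective state, with no hidden auxiliary state escaping the count. I would state explicitly that this is the standing assumption of the basic state-count proposition for the named class, rather than a claim about arbitrary classical models.
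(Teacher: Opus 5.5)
Your proposal is correct and follows the paper's proof. The classical bound is the state-count argument of Proposition~\ref{prop:contextual-lower-bound}, with the same standing assumption that the transcript--memory value fixes the effective state in the named class; the quantum bound is the definition of quantum memory as \(\lceil\log_2 d_n\rceil\) qubits; and the growth claim is only a naming convention. Your explicit qubit embedding is a harmless elaboration of that definition, but when padding each POVM or instrument, put the projector onto the orthogonal complement into a single outcome rather than zero (or the identity) on every outcome, so that the padded elements still sum to the identity.
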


\begin{proof}
The classical bound is Proposition~\ref{prop:contextual-lower-bound}.  The
quantum upper bound is the definition of Hilbert-space dimension as quantum
memory.  The final statement is only a naming convention for the asymptotic
gap.
\end{proof}

This theorem is deliberately formal.  It identifies exactly what has to be
proved in any proposed application: a quantum generator of small dimension and
a lower bound on classical coordination states against the chosen classical
baseline.

\section{Related Work: Expressivity, Memory, and State Tracking}

Prior work on quantum generative models shows that quantum correlations can
increase the expressive power of Born machines, quantum circuit generators,
and related models~\cite{gao2022generative}.  Recent work has also emphasized
the need for task-level metrics when assessing whether such expressive
advantages matter for practical learning problems~\cite{gili2024metrics}.
Quantum predictive-memory
separations similarly show that quantum states can reduce the memory needed
to simulate certain stochastic processes~\cite{gu2012occam,garner2017memory}.
That line of work is about representing or sampling distributions.  The
present note asks a different but related question: whether maintaining a
latent state over time can be done with less coordination.

An \emph{expressivity advantage} asks which distributions can be represented
compactly.  A \emph{coordination advantage} asks how many states, messages,
or update steps are needed to preserve one latent process.
The two can interact.  A distributional separation may imply a
state-complexity separation for a one-shot process, while a recurrent
contextual process may produce an advantage even when each local output
distribution is simple.

\paragraph{Contextuality as classical memory cost.}
Sequential contextuality was already formulated as a classical internal-memory
cost by Kleinmann et al., and Fagundes and Kleinmann extended that analysis to
the full probabilistic Peres--Mermin correlations
\cite{kleinmann2011memory,fagundes2017memory}.  Karanjai, Wallman, and Bartlett
later obtained growing stabilizer-simulation memory bounds
\cite{karanjai2018contextuality}.  Most directly, Prakash converts
graph-theoretic contextuality into an exponential quantum-memory advantage for
a formal-language promise problem, including bounded-error probabilistic
automata and entropic bounds~\cite{prakash2026memory}.  That result uses a
quantum finite automaton and an exclusivity-sensitive classical automaton
model; allowing finite confusability changes its exponential conclusion.  It
is direct prior art for contextuality-to-memory advantage and is not subsumed
here.  Conversely, the present theorem concerns a generic
boundary-preserving compiler across named one-way, streaming, and adaptive
causal access models; it does not reproduce Prakash's bounded-error automata
result.

\paragraph{AI state tracking and architectural lower bounds.}
Language-model state tracking has also been studied through learned
permutation-composition mechanisms, expressivity limits for state-space
models, and communication-complexity limitations of transformer layers
\cite{li2025state,merrill2024illusion,peng2024limitations}.  These results
motivate the semantic interfaces used below, but they diagnose particular
architectures or learned mechanisms.  The present resource comparison instead
charges the complete future-accessible state of any solver in the named access
model and asks whether quantum memory changes that cost.

\paragraph{Relation to Gao et al. and strong \(k\)-contextuality.}
Gao et al. show that
quantum correlations can give compact generative representations outside the
reach of selected classical Bayesian-network and neural-network families, and
they explicitly connect their separations to nonlocality and contextuality
\cite{gao2022generative}.  Their hidden-Markov-model result is especially
close: a basis-enhanced 2-gram model with state-space dimension \(D\) cannot,
under their support/KL criterion, be represented by a translation-form
classical HMM with fewer than \(D^{\Omega(\log D)}\) hidden units.  Since the
logarithm of the hidden-state count is a memory cost, this already has the
shape
\[
  \text{classical memory}=\Omega((\log D)^2),
  \qquad
  \text{quantum memory}=O(\log D),
\]
for that simulator class.

Teo et al. give a still more direct precursor to the contextual memory
interpretation~\cite{teo2025kcontextuality}.  They define strong
\(k\)-contextuality for translation tasks and show that a strongly
\(k\)-contextual task cannot be represented to finite relative entropy by a
classical streaming model with fewer than \(k\) latent states.  They also give
algorithms for estimating the relevant contextuality quantity and study it as
an empirical heuristic for memory separation.  Their theorem already
establishes a contextuality-to-classical-memory link; the present paper does
not claim that link as new.

The present note should therefore not be read as discovering that quantum
correlations can help generative models.  Rather, it recasts such separations
as boundary-relative coordination statements and connects them to modern
state-tracking failures.  Relative to Gao et al. and Teo et al., the added
claims are the boundary-preserving semantic transfer across one-way,
streaming, and adaptive causal access models, together with explicit
\(B,M,D\) accounting across recurrent AI repairs.  If a classical
model repairs a deficit by carrying a larger recurrent state, writing a
scratchpad, using external memory, or recomputing from the transcript, those
repairs are allowed but charged to \(M\), \(B\), or \(D\).  In this language,
Gao et al.'s HMM separation and Teo et al.'s strong \(k\)-contextuality theorem
are precursor coordination-cost statements; the contextual-LSP and dialogue
formulations below ask for corresponding statements under an explicit
interactive boundary.

\section{Boundary-Preserving Semantic Compilation}
\label{sec:semantic-compilation}

The source of a quantum advantage may be a communication problem, a streaming
algorithm, or an adaptive quantum process.  To make it an AI state-tracking
task, one must encode its events as semantic inputs and its answers as
task-level outputs without changing what information is available at each
time.  The following definition isolates that requirement.

\begin{definition}[Boundary-preserving semantic compiler]
Let \(\Pi_n\) be a finite task with an ordered sequence of environment events,
solver actions, and computational boundaries specified by an access model
\(\mathcal A_n\).  A boundary-preserving semantic compiler maps each source
event online to a finite text, symbolic, or multimodal block and maps solver
outputs back to the source output alphabet.  It must:
\begin{enumerate}[leftmargin=2em]
\item preserve the event order, adaptive choices, and source boundaries;
\item never re-supply a past source event after its boundary unless that
record is explicitly charged as persistent state;
\item preserve the source acceptance relation or transcript distribution up
to error \(\eta_n\); and
\item use at most \(a_n\) bits of parser, renderer, and compiler workspace,
including every compiler record retained across a source boundary.
\end{enumerate}
The resulting semantic task is denoted
\(\mathsf{Sem}_{\mathcal A_n}(\Pi_n)\).
\end{definition}

Let \(C_{\cl}^{\mathcal A,\epsilon}(\Pi_n)\) denote the minimum peak number of
classical bits in the complete future-accessible boundary state of a
finite-state causal solver for \(\Pi_n\) with error at most \(\epsilon\).
A quantum resource profile \((q_n,c_n)\) means \(q_n\) retained qubits and
\(c_n\) retained classical bits under the same access model.

\begin{theorem}[Semantic coordination transfer]
\label{thm:semantic-coordination-transfer}
Let \(\mathsf{Sem}_{\mathcal A_n}\) be a boundary-preserving semantic compiler
with workspace \(a_n\), compilation error \(\eta_n\), and at most \(T_n\)
event boundaries.  Then:
\begin{enumerate}[leftmargin=2em]
\item every classical solver for
\(\mathsf{Sem}_{\mathcal A_n}(\Pi_n)\) with error at most \(\epsilon\) and
peak coordination width \(W_\Sigma\) satisfies
\[
 W_\Sigma
 \geq
 C_{\cl}^{\mathcal A,\epsilon+\eta_n}(\Pi_n)
 -a_n-O(\log T_n);
\]
\item if \(\Pi_n\) has a quantum solver with error at most \(\epsilon\) and
resource profile \((q_n,c_n)\), then the compiled semantic task has a solver
with error at most \(\epsilon+\eta_n\) using \(q_n\) qubits and
\[
 c_n+a_n+O(\log T_n)
\]
classical bits across the corresponding boundaries.
\end{enumerate}
For an exact eventwise compiler that retains no additional
instance-dependent state, exact classical causal-state lower bounds and
quantum-memory upper bounds are preserved without asymptotic loss.
\end{theorem}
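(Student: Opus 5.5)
Both parts will be proved by simulation: we compose the compiler (or its inverse) with a given solver and count what crosses each boundary. For part 1, start from a classical solver \(S\) for \(\mathsf{Sem}_{\mathcal A_n}(\Pi_n)\) with error \(\epsilon\) and peak coordination width \(W_\Sigma\). We build a source-level solver \(S'\) for \(\Pi_n\) as follows. On each source event, \(S'\) runs the compiler's renderer to produce the semantic block and feeds it to \(S\). It then maps the semantic output back through the compiler's output map.

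By condition (1) of the definition, the source boundaries coincide with semantic boundaries, so the future-accessible state of \(S'\) at any source boundary consists of three pieces:
\begin{itemize}
\item the boundary state of \(S\), at most \(W_\Sigma\) bits;
\item the retained compiler records, at most \(a_n\) bits by condition (4);
\item a position counter identifying the current event index, at most \(\lceil\log_2 T_n\rceil\) bits. This counter is needed because the renderer may depend on the event position and a finite-state solver must carry it explicitly.
\end{itemize}
Condition (2) guarantees that \(S\) never receives re-supplied past source data for free. Any such record is already charged inside \(W_\Sigma\) or \(a_n\), so \(S'\) respects the access model \(\mathcal A_n\).

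Condition (3) handles the error. The transcript distribution (or acceptance relation) of \(S'\) on \(\Pi_n\) differs from that of \(S\) on the semantic task by at most \(\eta_n\). We make this precise with a coupling or data-processing step on the total-variation distance. A union bound then gives error at most \(\epsilon+\eta_n\). By the definition of \(C_{\cl}^{\mathcal A,\epsilon+\eta_n}(\Pi_n)\) as a minimum over such solvers,
\[
W_\Sigma+a_n+O(\log T_n)\ge C_{\cl}^{\mathcal A,\epsilon+\eta_n}(\Pi_n),
\]
which rearranges to the claim.

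Part 2 runs the reduction in the forward direction. Given the quantum source solver with profile \((q_n,c_n)\), the semantic solver parses each incoming block back to the source event using the compiler's parser. It applies the source quantum instrument to its \(q_n\)-qubit register, and renders the source output into the semantic output alphabet. Its retained state is the \(q_n\) qubits, the \(c_n\) classical bits, at most \(a_n\) bits of compiler records, and the \(O(\log T_n)\)-bit counter. Error adds as before by condition (3).

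The final sentence of the theorem is the special case \(\eta_n=0\) and \(a_n=0\), apart from instance-independent compiler tables, which are not counted as instance-dependent memory. If the event index is already implicit in the boundary label, the \(O(\log T_n)\) term also vanishes, so the bounds transfer with no asymptotic loss.

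The main obstacle is the error step in part 1. Condition (3) is stated for the compiler, not for every solver composed with it. We must argue that "preserving the acceptance relation up to \(\eta_n\)" means that for every adaptive strategy the induced transcript distributions are within \(\eta_n\). This requires reading condition (3) as a uniform, strategy-independent guarantee, and handling adaptivity by treating each round's parse/render pair as a fixed channel. The cleanest route is to interpret \(\eta_n\) as a bound on total-variation distance for every adaptively chosen sequence of solver actions, and then invoke the data-processing inequality once on the whole transcript, rather than per round. Per-round application would otherwise accumulate to \(T_n\eta_n\). A secondary subtlety is checking that parser randomness, if any, is either shared public coins or charged to \(a_n\).
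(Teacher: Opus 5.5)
Your proposal is correct and follows essentially the paper's argument. Like the paper, you compose the classical semantic solver with the compiler's encoder/decoder and charge, at each source boundary, the solver's state, at most \(a_n\) retained compiler bits, and an \(O(\log T_n)\) event counter; for the quantum direction you run the source protocol behind the online parser and renderer. Your added point, reading condition (3) as a single strategy-uniform total-variation bound on the whole transcript so that errors do not accumulate to \(T_n\eta_n\), makes precise a step the paper only asserts when it states that the error is at most \(\epsilon+\eta_n\).
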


\begin{proof}
Compose a classical semantic solver with the online encoder, parser, and
output decoder.  Because the compiler preserves event order and does not
reintroduce expired source records, the composition is a valid
\(\mathcal A_n\)-solver for \(\Pi_n\).  At each source boundary its complete
state consists of the semantic solver's state, at most \(a_n\) compiler bits,
and \(O(\log T_n)\) event-counter bits.  Its error is at most
\(\epsilon+\eta_n\).  The definition of
\(C_{\cl}^{\mathcal A,\epsilon+\eta_n}\) gives the first inequality.

For the quantum direction, run the semantic parser online, apply the source
quantum channel or measurement selected by the decoded event, retain its
\(q_n\)-qubit state and \(c_n\)-bit classical state, and render the source
output semantically.  The workspace and error overheads are those stated.
When the compiler is exact and retains no state, the compositions preserve
the source boundary-state sets themselves, giving the final claim.
\end{proof}

This theorem is the paper's central transfer principle.  It does not create a
new quantum communication protocol, streaming algorithm, or contextuality
lower bound.  It states when any such result becomes an inference-time quantum
AI result rather than a superficial relabelling: the semantic interface must
preserve the causal boundary on which the resource comparison is made.

\section{AI State-Tracking Baselines and One-Way Applications}

Transformer state-tracking failures provide the motivating classical
bottleneck~\cite{vaswani2017attention,huang2025lsp,cui2026topological}.
A feed-forward transformer can often use the context as a workaround, but
persistent dynamic state is not free: the state must be stored, rewritten,
copied into a scratchpad, retrieved from memory, or recomputed.  The
coordination-cost picture turns those engineering choices into resource moves:
\[
  \text{recurrence}\mapsto M,\qquad
  \text{scratchpad or tool messages}\mapsto B,\qquad
  \text{extra latent computation}\mapsto D.
\]

This perspective treats classical repairs as serious baselines.  A quantum
proposal must beat not only a plain transformer, but also classical models
that are allowed to spend the relevant \(B,M,D\) resources.

\paragraph{Latent-state persistence benchmarks.}
The latent-state-persistence tasks of Huang et al. are useful because they
separate local linguistic plausibility from the ability to preserve a hidden
state across many queries~\cite{huang2025lsp}.  In this paper they play the
role of a classical stress test, not a quantum benchmark.  Number guessing,
yes/no state tracking, and related tasks can usually be repaired by giving the
classical model a sufficient recurrent state, a scratchpad, or external
memory.  The contribution of the present framework is to charge those repairs:
the recurrent state contributes to \(M\), the scratchpad contributes to \(B\),
and repeated inference contributes to \(D\).  To obtain a specifically quantum
advantage, one must enrich this kind of benchmark so that the hidden state is
not merely unknown but query-contextual or noncommuting.

\paragraph{Topological state-tracking dialogues.}
The state-tracking problems emphasized by Mozer, Siddiqui, and Liu provide a
second useful shell~\cite{cui2026topological}.  A transcript describes a path
of local updates, and later prompts query the current state.  If the state
space is an ordinary classical space, this again admits an ordinary recurrent
repair: store a sufficient coordinate for the current state.  The route to a
quantum separation is therefore not to claim that these classical tasks are
already quantum, but to keep their dialogue structure while replacing the
tracked state by a noncommuting latent state.  The history then prepares
\(\rho_h\), the prompt selects a measurement context \(c\), and the answer is
sampled from \(\Tr(M^c_o\rho_h)\).  In this form, the topological/state-
tracking burden is preserved, but the classical repair is no longer a small
coordinate; it is a contextual response chart whose size can be lower-bounded.

\paragraph{Relational state tracking.}
Multi-entity text tasks provide a natural source of large coordination costs.
If a history determines \(N=\Theta(n^2)\) independent pairwise facts among
\(n\) entities and a query asks for any selected fact, then an exact classical
tracker needs \(\Theta(n^2)\) retained bits.  However, this observation alone
does not imply a quantum advantage.  An \(m\)-qubit latent representation that
can answer any one of the \(N\) independent facts with success probability
\(p>1/2\) is a quantum random-access code, so Nayak's bound gives
\[
  m\ge (1-H_2(p))N=\Omega(n^2),
\]
where \(H_2\) is the binary entropy~\cite{nayak1999randomaccess}.  Thus an
arbitrary classical relation table cannot be compressed to \(O(n)\) qubits if
the benchmark allows reliable random access to all its entries.

The plausible quantum target is narrower: the text-induced IO relation should
have high classical coordination rank but low quantum, or positive
semidefinite, rank.  In binary-output form, a family of histories and queries
defines a nonnegative matrix
\[
  A_{h,c}=P(o=1\mid h,c).
\]
A classical latent-state factorization corresponds to a nonnegative
factorization of \(A\), while a quantum latent-state representation has the
form
\[
  A_{h,c}=\Tr(E_c\rho_h),
\]
which is a positive-semidefinite factorization~\cite{fawzi2015psdrank}.  A
natural-text or text-wrapped relational benchmark would therefore show a
coordination advantage only if its conditional-response matrix has large
nonnegative rank but small PSD rank.  Stabilizer state tracking is one
structured instance of this pattern: it is not an arbitrary table of
classical pairwise facts, but a noncommuting relational state whose query
responses are compactly represented by an \(n\)-qubit state.

A more classical-looking route comes from one-way communication complexity.
Here it serves as a calibration of the same boundary resource, stated at the
level of a general reading-comprehension schema rather than a single
hand-picked relation.

\subsection{One-way communication calibration}

\begin{definition}[Entity-attribute synopsis QA]
An entity-attribute synopsis QA family consists of finite sets
\(\mathcal X_N\) of passage states, \(\mathcal C_N\) of query contexts, and
\(\mathcal O_N\) of answers, together with a relation
\[
  R_N\subseteq \mathcal X_N\times \mathcal C_N\times \mathcal O_N .
\]
A passage \(h_x\) is an unambiguous text encoding of an entity-attribute
state \(x\in\mathcal X_N\).  A query \(c\in\mathcal C_N\) is revealed after
the passage has been processed.  A valid answer is any \(o\in\mathcal O_N\)
such that \((x,c,o)\in R_N\).
\end{definition}

The benchmark boundary is essential.  The passage is read first, then only a
boundary state is retained, and the query is revealed later.  If the full
passage is carried across the boundary, its length is charged to \(B\); if a
classical model rereads or rescans the passage after seeing the query, that
repair is charged to \(D\).

\begin{corollary}[One-way lift to synopsis QA]
\label{thm:one-way-lift}
Suppose the relation problem \(R_N\), under a chosen input distribution,
admits a one-way quantum protocol with \(q_N\) qubits and success probability
at least \(1-\epsilon\), while every one-way randomized classical protocol
with the same success probability requires at least \(c_N\) bits.  Then the
corresponding entity-attribute synopsis QA family has a quantum
state-tracking solver using \(q_N\) qubits across the passage--query boundary,
and every classical solver in the same one-way boundary model satisfies
\[
  B+M\ge c_N .
\]
\end{corollary}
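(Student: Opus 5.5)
The plan is to obtain the corollary as the one-way specialization of Theorem~\ref{thm:semantic-coordination-transfer}. The source task \(\Pi_N\) is the two-event one-way problem for \(R_N\): the environment first reveals \(x\), a single boundary \(\Sigma\) follows, and then \(c\) is revealed and an answer \(o\) is output. The access model \(\mathcal A_N\) is one-way communication. The message is everything that crosses \(\Sigma\), and the receiver sees only that message and \(c\).

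\textbf{Compiler.} I will first build an exact eventwise compiler \(\mathsf{Sem}_{\mathcal A_N}\). It maps \(x\mapsto h_x\) and \(c\mapsto\) its query text, and maps answers to \(\mathcal O_N\) by the identity. Unambiguity of the passage encoding gives \(\eta_N=0\). The compiler retains nothing across \(\Sigma\), since the passage is never re-supplied, so \(a_N=0\). With \(T_N=2\), the \(O(\log T_N)\) counter is an absolute constant. In fact it can be dropped: there is only one boundary, so its position is known implicitly. The exact-compiler clause of the theorem then says the bounds transfer without loss.

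\textbf{Lower bound.} Take any classical solver in the one-way boundary model with success probability at least \(1-\epsilon\). Its full future-accessible state at \(\Sigma\) consists of the transcript and the retained memory, of at most \(B+M\) bits. Composing with the compiler gives a one-way randomized protocol for \(R_N\): Alice renders \(h_x\), runs the solver up to \(\Sigma\), and sends that state. Bob renders \(c\), resumes the solver, and decodes the answer. Shared or private randomness of the solver is allowed in the randomized protocol class. The protocol has the same success probability under the chosen input distribution, so \(B+M\ge c_N\).

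\textbf{Upper bound.} Given the quantum protocol, the semantic solver parses \(h_x\) to recover \(x\) and prepares Alice's \(q_N\)-qubit message as its retained state. After the query it parses \(c\), applies Bob's measurement, and renders the output. This is part 2 of Theorem~\ref{thm:semantic-coordination-transfer} with \(c_n=a_N=0\).

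\textbf{Main obstacle.} The delicate step is making sure the classical side loses nothing. The parser and renderer must not smuggle extra state across \(\Sigma\), and any rereading of the passage must be excluded from the one-way model, not merely charged to \(D\). If rereading were allowed, the reduction would fail. So I will state explicitly that the one-way boundary model forbids post-query access to \(h_x\). I will also check that the randomness conventions and the error notion (distributional versus worst case) match between the source lower bound and the solver class. These are definitional alignments rather than calculations.
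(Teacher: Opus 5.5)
Your proposal is correct and follows the paper's proof: both treat the result as the one-boundary specialization of Theorem~\ref{thm:semantic-coordination-transfer} with no retained compiler state. Both then obtain \(B+M\ge c_N\) by the direct Alice/Bob simulation, in which the passage processor's boundary state is the message and the query responder is the decoder; this reduction is exact, so it incurs no \(O(\log T_N)\) loss. Your added care about excluding post-query rereading, rather than charging it to \(D\), and about matching the distributional error notion is consistent with the statement's ``same one-way boundary model'' hypothesis.
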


\begin{proof}
This is the exact one-boundary specialization of
Theorem~\ref{thm:semantic-coordination-transfer}, with the passage and query
as the two source events and no retained compiler state.
The quantum upper bound is obtained by running the quantum one-way encoder
after parsing the passage and retaining its \(q_N\)-qubit message as the
boundary state.  After the query is revealed, the query responder runs the
one-way decoder and renders its output as text.

Conversely, any classical QA solver using \(m=B+M\) boundary bits gives a
one-way classical protocol for \(R_N\): Alice parses \(x\), forms the passage
\(h_x\), runs the passage processor, and sends the resulting boundary state
to Bob; Bob parses \(c\), runs the query responder, and outputs its answer.
Thus \(m\ge c_N\).
\end{proof}

The content is therefore not tied to a specific Boolean operation.  Direct
random access to arbitrary entity attributes is ruled out by the quantum
random-access-code obstruction above, but any entity-attribute relation family
with a one-way quantum/classical separation yields a synopsis-QA separation.
Hidden matching is a simple instantiation of this more general lifting
principle.

In that instantiation, a passage describes \(N\) named records, each with a
binary attribute such as cohort, stance, access level, or case label.  After
the passage has been processed, a later query gives a list of disjoint record
pairs and asks the model to report any listed pair together with whether the
two records have the same or different labels.  This is a normal
database-style question about entities and relations; it does not mention
quantum physics.

Under the memory version of this task, the problem is exactly the hidden
matching problem: it has an \(O(\log N)\)-qubit one-way protocol and requires
\(\Omega(\sqrt N)\) classical one-way bits at bounded error
\cite{baryossef2004hiddenmatching}.  Kerenidis and Raz study the related
Boolean Hidden Matching partial function~\cite{kerenidis2006booleanhiddenmatching};
the relational result used here is the Bar-Yossef--Jayram--Kerenidis problem.
This
avoids the random-access-code obstruction because the query does not ask for a
pre-specified stored bit; it lets the solver choose any edge from a large
matching and report the corresponding relation.

\paragraph{Chart interpretation.}
This interpretation uses the sheaf-theoretic framework for contextuality and
its database reading~\cite{abramsky2011sheaf,abramsky2012databases}.  The relevant
``global'' notion here is the local-to-global one: local tables or contexts
are easy to satisfy, while the issue is whether they can be glued into a
single global section.  We do not impose the physical ``genuine global''
conditions of multipartite KS scenarios.

For a fixed matching \(M\), the response condition is local: output one edge
of \(M\) and the corresponding parity.  A classical boundary state \(s\),
however, induces a global response chart
\[
  g_s:\; M\longmapsto (i,j,b)
\]
over all possible matchings.  A solver using \(B+M\) classical bits can select
at most \(2^{B+M}\) such charts after reading the history.  The hidden-matching
lower bound says that, at bounded error, no small family of classical global
charts can cover the required history--query relation.

This is not a bare KS contradiction.  If the full string \(x\) is stored,
then the assignment
\[
  b_{ij}=x_i\oplus x_j
\]
is a perfectly good global parity chart for all pairs.  The obstruction is
resource-sensitive: classical simulation must spend many bits to select an
adequate chart, while the quantum protocol keeps a compact phase state from
which a query context extracts one valid local relation by interference.  In
this sense, hidden matching is a classical-looking local-to-global task whose
coordination advantage can be read in a resource-sensitive global-chart
language.  The local-to-global language is not new; the added point is to
charge the number of selectable charts to the boundary resources of a
state-tracking generator.

\begin{definition}[Matched-entity consistency QA]
Let \(N\) be even.  A passage \(h_x\) describes \(N\) named entities with
binary labels \(x\in\{0,1\}^N\), using a fixed unambiguous grammar.  A query
\(c_M\) presents a perfect matching \(M\) on the entity set \([N]\).  A valid
answer is any triple \((i,j,b)\) such that \((i,j)\in M\) and
\[
  b=x_i\oplus x_j .
\]
Here \(b=0\) means that the two selected entities have the same label, and
\(b=1\) means that they have different labels.
\end{definition}

\begin{corollary}[Matched-entity QA separation]
\label{thm:matched-entity-qa}
Consider the one-way state-tracking protocol in which \(x\) is drawn
uniformly from \(\{0,1\}^N\), the passage processor sees \(h_x\), a boundary
state is retained, and only then a uniformly random perfect matching \(M\) is
revealed to the query responder.  The query responder must output a valid
triple for \((x,M)\) with probability at least \(2/3\).

There is an exact quantum boundary-state protocol using
\(\lceil\log_2 N\rceil\) qubits.  Any bounded-error classical protocol in the
same one-way boundary model requires
\[
  B+M=\Omega(\sqrt N)
\]
bits.
\end{corollary}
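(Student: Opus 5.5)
The plan is to obtain both halves as a direct instantiation of Corollary~\ref{thm:one-way-lift}, that is, of the exact one-boundary case of Theorem~\ref{thm:semantic-coordination-transfer}. We take the relation \(R_N=\{(x,M,(i,j,b)):(i,j)\in M,\ b=x_i\oplus x_j\}\), put the uniform distribution on \(x\) and on perfect matchings \(M\), and treat the passage \(h_x\) and the query \(c_M\) as the two source events. The compiler is exact and eventwise. The grammar is fixed and unambiguous, so parsing \(h_x\) recovers \(x\) and parsing \(c_M\) recovers \(M\), with \(\eta_N=0\). No compiler record survives the passage--query boundary, and \(T_N=2\), so \(a_N+O(\log T_N)\) is absent or a constant. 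It then suffices to supply a \(q_N=\lceil\log_2 N\rceil\) quantum protocol and a \(c_N=\Omega(\sqrt N)\) classical one-way bound for \(R_N\).

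\textbf{Quantum upper bound.} After parsing \(h_x\), the passage processor prepares the Bar-Yossef--Jayram--Kerenidis state
\[
\ket{\psi_x}=N^{-1/2}\sum_{i=1}^N(-1)^{x_i}\ket{i}
\]
on \(\lceil\log_2 N\rceil\) qubits and retains it as the boundary state. After parsing \(c_M\), the responder measures in the basis \(\{(\ket i\pm\ket j)/\sqrt2:(i,j)\in M\}\). This outcome set is a complete orthonormal basis because \(M\) is perfect. Outcome \((i,j,+)\) can occur only if \(x_i=x_j\), and outcome \((i,j,-)\) only if \(x_i\neq x_j\). Rendering the outcome as \((i,j,b)\) therefore gives a valid triple with probability \(1\). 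This matches the word "exact" in the statement and exceeds the \(2/3\) requirement. If \(N\) is not a power of two, pad the register with unused basis states.

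\textbf{Classical lower bound.} Apply the converse direction of Corollary~\ref{thm:one-way-lift}. Any classical QA solver with \(m=B+M\) boundary bits yields a one-way protocol: Alice renders \(h_x\), runs the passage processor, and sends the boundary state; Bob renders \(c_M\), runs the responder, and outputs. This protocol succeeds with probability at least \(2/3\) under the uniform distribution. The bounded-error one-way lower bound of \(\Omega(\sqrt N)\) for hidden matching from~\cite{baryossef2004hiddenmatching} then gives \(m=\Omega(\sqrt N)\).

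The main obstacle I expect is aligning the error model in this last step. The lower bound in~\cite{baryossef2004hiddenmatching} is for randomized protocols, whereas here success is averaged over a uniform input distribution, so some form of distributional bound is needed. I would first check that their proof is already a distributional argument for uniform \(x\) and uniform \(M\). That is plausible, since it proceeds by a Fourier or birthday-style argument on the set of \(x\) consistent with a message. If it is not, I would invoke Yao's principle together with the remark that their hard distribution is the uniform one. Shared randomness can be fixed to its best value without increasing \(m\), so a deterministic protocol suffices. Constant-factor changes in the success threshold \(2/3\) affect only the constant hidden in \(\Omega(\cdot)\).
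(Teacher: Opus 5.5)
Your proposal is correct and follows the paper's proof: it uses the same phase state $\ket{\psi_x}$ and the same conversion of a QA solver with $m=B+M$ boundary bits into a one-way hidden-matching protocol. Your single basis $\{(\ket i\pm\ket j)/\sqrt2\}_{(i,j)\in M}$ is just the paper's edge-then-sign measurement done in one step. The paper cites the Bar-Yossef--Jayram--Kerenidis bound without addressing the uniform-input versus worst-case point you raise. You can settle it by checking that their argument is distributional, or more simply by random self-reducibility: a shared random permutation of the entities and a shared XOR mask on the labels turn any solver that succeeds on average over uniform $(x,M)$ into a public-coin protocol with the same message length and the same success probability on every input.
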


\begin{proof}
For the quantum upper bound, after reading the passage prepare
\[
  \ket{\psi_x}
  =
  \frac{1}{\sqrt N}
  \sum_{i=1}^{N}(-1)^{x_i}\ket{i},
\]
which uses \(\lceil\log_2 N\rceil\) qubits.  Given a matching \(M\), measure
first in the orthogonal decomposition
\[
  \mathrm{span}\{\ket{i},\ket{j}\},\qquad (i,j)\in M,
\]
which selects an edge \((i,j)\in M\).  Conditional on this edge, the state is
proportional to
\[
  (-1)^{x_i}\ket{i}+(-1)^{x_j}\ket{j}.
\]
Now measure in the basis
\[
  \frac{\ket{i}\pm\ket{j}}{\sqrt 2}
\]
inside that two-dimensional subspace.  The sign is \(+\) iff
\(x_i\oplus x_j=0\) and \(-\) iff \(x_i\oplus x_j=1\), so the responder outputs
\((i,j,x_i\oplus x_j)\) with certainty.

For the classical lower bound, suppose a classical state-tracking solver uses
\(m=B+M\) boundary bits and succeeds with probability at least \(2/3\).  This
solver gives a one-way randomized communication protocol for hidden matching:
Alice, given \(x\), forms the passage \(h_x\), runs the passage processor, and
sends the resulting \(m\)-bit boundary state to Bob; Bob, given \(M\), runs
the query responder and outputs its triple.  The success probability is the
same as that of the QA solver.  The one-way communication lower bound for
hidden matching therefore implies \(m=\Omega(\sqrt N)\)
\cite{baryossef2004hiddenmatching}.
\end{proof}

\paragraph{Why RNNs and state-space models do not trivialize the question.}
They do trivialize one weak claim: it is not enough to show that a fixed-depth
feed-forward transformer loses track of a latent variable.  A recurrent model
can store the variable.  But recurrence changes the resource point from small
\(M\) to larger \(M\).  The nontrivial question is whether, for some process
family \(P_n\), every classical recurrent repair requires
\(\Omega(f(n))\) bits while a quantum latent state uses \(O(g(n))\) qubits with
\(g(n)\ll f(n)\).

\section{Online Quantum-AI Specialization and Coordination Width}

The one-way synopsis theorem places one boundary between a completed passage
and a later query.  State tracking in the sense of an update rule
\(s_t=f(s_{t-1},x_t)\) instead places a boundary after every update.  This
section gives the corresponding online notion and shows how ordinary
streaming-space lower bounds become architecture-independent coordination
lower bounds.  It is the one-pass specialization of
Theorem~\ref{thm:semantic-coordination-transfer}.

\begin{definition}[Peak online coordination width]
Consider a solver that consumes update blocks
\(x_1,\ldots,x_T\) in order.  Let \(Z_t\) contain all stream-dependent
information available after \(x_t\) has been consumed and before
\(x_{t+1}\) arrives.  This includes retained activations, recurrent states,
accessible cache entries, generated scratchpad symbols, and records written to
an external tool or store.  If \(Z_t\) has at most \(2^{w_t}\)
operationally distinguishable classical values, define
\[
  W_\Sigma=\max_{0\leq t\leq T}w_t
\]
to be the solver's peak online coordination width across the family of cuts
\(\Sigma=(\Sigma_0,\ldots,\Sigma_T)\).
\end{definition}

Under the bit accounting used above,
\[
  W_\Sigma\leq\max_t(B_t+M_t)
\]
when every accessible explicit record and internal state is included in
\(B_t+M_t\).  Conversely, representing \(Z_t\) by an index costs at most
\(w_t\) bits.  Thus \(W_\Sigma\) is the sequential, peak-space projection of
the coordination region.  A neural state with \(r\) real coordinates at
\(p\)-bit operational precision contributes at most \(rp\) bits; allowing an
exact real number to encode an unbounded stream would leave the finite-space
model and is not a finite-information classical baseline.

\begin{definition}[Semantics-preserving online compiler]
Let \(\Pi_n\) be a streaming relation problem with update alphabet
\(\mathcal U_n\) and output relation
\[
  R_{\Pi_n}\subseteq \mathcal U_n^*\times\mathcal Y_n.
\]
A semantics-preserving online compiler consists of a prefix-decodable encoding
\(\mathsf{Enc}_n(u)\) of each update as one text or multimodal block and an
answer decoder \(\mathsf{Dec}_n\) such that
\[
  (u_{1:T},y)\in R_{\Pi_n}
  \quad\Longleftrightarrow\quad
  \bigl(\mathsf{Enc}_n(u_1),\ldots,
        \mathsf{Enc}_n(u_T),\mathsf{Dec}_n^{-1}(y)\bigr)
  \text{ is accepted}.
\]
The compiler has overhead \(a_n\) if parsing the current block and rendering
the final answer use at most \(a_n\) bits of workspace and retain no
additional stream-dependent information between update boundaries.
\end{definition}

The no-retained-information clause prevents the linguistic wrapper itself from
hiding a large database.  It does not require constant-length text: entity
identifiers may use \(O(\log n)\) bits, provided only the current record is
being parsed.

\begin{corollary}[Online semantic lift]
\label{thm:online-semantic-lift}
Let \(S_{\cl}(n,\epsilon)\) be a lower bound on the space of every randomized
one-pass classical streaming algorithm for \(\Pi_n\) with error at most
\(\epsilon\).  If an online compiled AI solver has error at most \(\epsilon\),
compiler overhead \(a_n\), and peak classical coordination width
\(W_\Sigma\), then
\[
  W_\Sigma
  \geq
  S_{\cl}(n,\epsilon)-a_n-O(\log T).
\]
If \(\Pi_n\) has a one-pass quantum streaming algorithm using \(S_{\q}\)
qubits and \(C_{\q}\) classical bits, then the compiled task has a quantum
recurrent solver using
\[
  S_{\q}\ \text{qubits}
  \qquad\text{and}\qquad
  C_{\q}+a_n+O(\log T)\ \text{classical bits}.
\]
\end{corollary}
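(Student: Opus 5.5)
The plan is to treat this as the one-pass specialization of Theorem~\ref{thm:semantic-coordination-transfer}: take the source access model \(\mathcal A_n\) to be one-pass streaming, with a boundary after each update, so \(T_n=T\). The online compiler \((\mathsf{Enc}_n,\mathsf{Dec}_n)\) is a boundary-preserving semantic compiler in the sense of Section~\ref{sec:semantic-compilation}, for three reasons. It maps each update \(u_t\) to exactly one prefix-decodable block, so event order and boundaries are preserved. It retains no stream-dependent information between update boundaries, so no past event is re-supplied. And its acceptance condition is an exact biconditional, so \(\eta_n=0\). The workspace bound \(a_n\) matches clause~4 of that definition. The only real work is to check that the minimum boundary width \(C_{\cl}^{\mathcal A,\epsilon}(\Pi_n)\) is at least the streaming lower bound \(S_{\cl}(n,\epsilon)\). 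Both directions then follow directly from Theorem~\ref{thm:semantic-coordination-transfer}.

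For the classical direction I will also give the simulation explicitly, to avoid relying on that identification. Given a compiled AI solver with peak width \(W_\Sigma\), build a streaming algorithm for \(\Pi_n\) as follows. On update \(u_t\), it computes \(\mathsf{Enc}_n(u_t)\) using the \(a_n\)-bit workspace and feeds that block to the solver's update map, which is an arbitrary (possibly randomized) function in the one-pass model. It then stores an index of \(Z_t\) in \(w_t\le W_\Sigma\) bits. At the end it runs the responder and applies \(\mathsf{Dec}_n\).

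Randomness needs care. Either it is public or fixed by averaging, or the solver's internal random choices are part of \(Z_t\) and are therefore already counted. Any transient workspace used inside one update is not charged, because streaming space counts only what persists between updates; if the lower bound \(S_{\cl}\) is stated for total workspace, I would instead charge the \(a_n\) parser bits. The algorithm also needs an \(O(\log T)\)-bit position counter, for example to know when the stream ends or to index blocks, unless the solver state already carries it. This gives space at most \(W_\Sigma+a_n+O(\log T)\). Because acceptance is preserved exactly, the error is still \(\epsilon\). Then \(S_{\cl}(n,\epsilon)\le W_\Sigma+a_n+O(\log T)\), which rearranges to the claim.

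For the quantum direction, the recurrent solver parses each block online within the \(a_n\) bits, recovers \(u_t\) by prefix decodability, applies the streaming algorithm's update channel to its \(S_{\q}\)-qubit register and \(C_{\q}\)-bit classical register, and at the end applies the final measurement and renders the answer through \(\mathsf{Dec}_n^{-1}\). The parser holds no cross-boundary state, so the retained resources are \(S_{\q}\) qubits and \(C_{\q}+a_n+O(\log T)\) classical bits, where the \(a_n\) and counter terms bound any block-parsing remnant. Correctness follows from the biconditional.

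The step I expect to be the main obstacle is the accounting identification: that the width \(w_t\), defined as the number of operationally distinguishable classical values of \(Z_t\), really bounds the memory of a finite-space streaming algorithm. The risk is a solver whose finite-precision reals or external stores exceed \(w_t\) in raw size while being indistinguishable in their effect on the future. My first approach is to quotient \(Z_t\) by operational equivalence (same future behaviour on all continuations), which yields an index of \(w_t\) bits that can drive a simulated update map. Here I use the finite-information convention stated after the width definition, so unbounded real precision is excluded.
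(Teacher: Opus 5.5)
Your proposal is correct and takes essentially the paper's route: the paper's proof is exactly the one-pass, exact-compiler ($\eta_n=0$) specialization of Theorem~\ref{thm:semantic-coordination-transfer}, and your explicit streaming simulation is that theorem's composition argument written out. Your observation that one only needs $C_{\cl}^{\mathcal A,\epsilon}(\Pi_n)\ge S_{\cl}(n,\epsilon)$, rather than the equality the paper ``substitutes'', is the right direction, and your quotient-by-operational-equivalence treatment of $w_t$ is a sound detail that the paper leaves implicit.
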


\begin{proof}
Use Theorem~\ref{thm:semantic-coordination-transfer} with the one-pass access
model and the exact eventwise compiler above.  Substituting
\(C_{\cl}^{\mathcal A,\epsilon}=S_{\cl}(n,\epsilon)\) gives the classical
inequality, while the source profile \((S_{\q},C_{\q})\) gives the displayed
quantum and classical workspace bounds.
\end{proof}

\begin{remark}[Architecture independence]
The classical implication uses only the number of distinguishable states
carried across update cuts.  The update map may be nonlinear, randomized, and
computationally unbounded.  It therefore applies equally to finite-precision
RNNs, nonlinear SSMs, recurrent transformers, KV-cache systems, scratchpads,
and tool-using agents, provided all persistent information is counted in
\(W_\Sigma\).  Extra local depth cannot reconstruct distinctions that were not
retained after the stream passed.
\end{remark}

\begin{remark}[Fixed parameters versus instance-dependent state]
The parameters of a pretrained model are part of the fixed algorithm
description and are not charged as online memory.  Write a recurrent
implementation schematically as
\[
  z_{t+1}=F_\theta(z_t,x_t),
  \qquad
  p(o_t\mid c_t,z_t)=G_\theta(c_t,z_t).
\]
The streaming lower bound already permits \(F_\theta\) and \(G_\theta\) to be
arbitrarily complicated.  Nevertheless, if two realized histories induce the
same future-accessible state \(z_t\), fixed parameters cannot make their
response distributions differ under the same future query.  Model weights may
store the update rule or a vast read-only lookup table, but the
instance-dependent index selecting the realized history must still cross the
boundary.  Test-time weight updates, adapters, fast weights, or model selection
that depend on the stream are therefore part of \(Z_t\) and are charged to
\(W_\Sigma\).
\end{remark}

\begin{remark}[The full-context loophole]
If the complete raw transcript remains freely available for random access,
the solver is no longer one-pass and Corollary~\ref{thm:online-semantic-lift}
does not apply.  One must either charge the stored transcript as external
memory and its retrieval as boundary traffic, or analyze a multi-pass model.
This is precisely the distinction between explicit dynamic state and the
transformer workaround of re-examining its whole context.  The access regime,
not the linguistic surface, is therefore part of the theorem statement.
An \(L\)-token context over a vocabulary of size \(V\) can itself carry up to
\(L\log_2V\) raw token-index bits, and its accessible KV cache is also
stream-dependent state.  A sufficiently large context can therefore satisfy
the lower bounds in this paper; it is a classical repair with a potentially
large \(W_\Sigma\), not a violation of the theorem.
\end{remark}

\subsection{Technical graph instance: dynamic relation summaries}

The first calibration is a direct technical instance of the lift.  It has an
ordinary graph and database interpretation.
There are \(n\) named entities.  Each update block states one directed
relation, for example, ``entity \(u\) places a one-way dependency on entity
\(v\).''  Once a block has been processed, it is not supplied again.  At the
end the solver must estimate the maximum number of reported relations that can
run from one side of a bipartition to the other.  Formally, removing the fixed
grammar leaves the insertion-only Max-DiCut edge stream.

\begin{corollary}[Dynamic relation-summary separation]
\label{prop:dynamic-relation-summary}
For the terminal approximation ratio \(0.4844\) and failure probability
\(\delta\), the dynamic relation-summary task has a quantum recurrent solver
using
\[
  O\!\left(\log^5 n\log\frac1\delta\right)
\]
qubits of online workspace, plus logarithmic compiler workspace.  Every
finite-information classical recurrent solver in the same one-pass access
regime satisfies
\[
  W_\Sigma=\Omega(\sqrt n).
\]
Equivalently, its family of effective recurrent coordination states has size
\[
  K_{\cl}^{\rm online}\geq 2^{\Omega(\sqrt n)}.
\]
\end{corollary}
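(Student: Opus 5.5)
The plan is to derive the corollary as a direct instance of Corollary~\ref{thm:online-semantic-lift}, taking the source problem \(\Pi_n\) to be insertion-only streaming Max-DiCut on \(n\) vertices. Two imported streaming results supply the inputs.

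\emph{Quantum upper bound.} There is a one-pass quantum streaming algorithm that outputs a \(0.4844\)-approximation of the Max-DiCut value with failure probability \(\delta\). It uses \(S_{\q}=O(\log^5 n\log(1/\delta))\) qubits and polylogarithmic classical workspace \(C_{\q}\).

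\emph{Classical lower bound.} Every randomized one-pass classical streaming algorithm achieving approximation ratio better than \(4/9\approx 0.444\) needs space \(S_{\cl}=\Omega(\sqrt n)\). This is the Chou--Golovnev--Velusamy type bound. Since \(0.4844>4/9\), it applies at the stated ratio.

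First, I would check that the dynamic relation-summary grammar is a semantics-preserving online compiler in the sense of the definition preceding Corollary~\ref{thm:online-semantic-lift}.
\begin{itemize}
\item Each block ``entity \(u\) places a one-way dependency on entity \(v\)'' is prefix-decodable into the edge \((u,v)\).
\item Parsing it needs only the two entity identifiers, so \(a_n=O(\log n)\).
\item The parser retains nothing between blocks.
\item The terminal numeric answer is rendered and decoded bijectively, so the acceptance relation (an approximation within ratio \(0.4844\)) is preserved exactly, with \(\eta_n=0\).
\end{itemize}
The number of boundaries is \(T\le n^2\) for simple graphs, or more generally \(\mathrm{poly}(n)\), so \(O(\log T)=O(\log n)\).

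Second, I would apply the quantum half of Corollary~\ref{thm:online-semantic-lift}. It gives a recurrent solver with \(S_{\q}\) qubits and \(C_{\q}+O(\log n)\) classical bits, which matches the claimed qubit count plus logarithmic compiler workspace. Any polylogarithmic classical workspace of the imported algorithm is absorbed into that remainder.

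Third, I would apply the classical half: \(W_\Sigma\ge \Omega(\sqrt n)-O(\log n)=\Omega(\sqrt n)\). The equivalent statement \(K_{\cl}^{\rm online}\ge 2^{W_\Sigma}\ge 2^{\Omega(\sqrt n)}\) then follows from the definition of \(W_\Sigma\) as the logarithm of the number of operationally distinguishable boundary values.

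I expect the main obstacle to be matching the imported parameters precisely.
\begin{itemize}
\item The classical lower bound must hold for the same error/failure-probability regime: bounded-error, ratio strictly above \(4/9\). It must also hold on input families whose edge streams are realizable by the grammar, which may include multi-edges or only a bounded number of edges.
\item The quantum algorithm's guarantee must be stated for insertion-only Max-DiCut with the ratio \(0.4844\) and the \(\log^5 n\) dependence.
\end{itemize}
I would first consult the source quantum streaming paper to confirm that its model (qubits plus classical bits, one pass) matches the access model \(\mathcal A\) used in Theorem~\ref{thm:semantic-coordination-transfer}. I would also confirm that the \(\Omega(\sqrt n)\) classical bound is proved for every randomized algorithm, not only for sketching algorithms, since the corollary claims a bound against all finite-information recurrent solvers.
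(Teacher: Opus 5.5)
Your proposal follows the paper's proof step for step. Both import the one-pass quantum $0.4844$-approximation for Max-DiCut and the $\Omega(\sqrt n)$ classical bound for ratios above $4/9$ (the bound Kallaugher--Parekh--Voronova invoke), check that the relation grammar is a prefix-decodable online compiler with $O(\log n)$ workspace, transfer both bounds through Corollary~\ref{thm:online-semantic-lift} using $0.4844>4/9$, and exponentiate the width bound to get the state-count form. Your extra checks, that $O(\log T)=O(\log n)$ and that the lower bound holds against general rather than sketching-only randomized algorithms, are details the paper leaves implicit. One phrase is loose: a polylogarithmic classical workspace of the imported algorithm cannot be absorbed into an $O(\log n)$ compiler term, so it should be reported alongside the qubit count, a point the paper's own statement also passes over.
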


\begin{proof}
Kallaugher, Parekh, and Voronova give a one-pass quantum streaming algorithm
with the displayed space bound and approximation ratio.  The classical
streaming lower bound they invoke states that every ratio strictly larger than
\(4/9\) requires \(\Omega(\sqrt n)\) bits~\cite{kpv2024maxdicut}.
The fixed relation grammar is prefix-decodable with \(O(\log n)\) workspace.
Corollary~\ref{thm:online-semantic-lift} transfers both bounds, and
\(0.4844>4/9\).  Exponentiating the width lower bound gives the state-count
form.
\end{proof}

This task realizes DeepMind's schematic update
\(s_t=f(s_{t-1},x_t)\): each sentence modifies a compact synopsis of a
growing relational world.  The conclusion is stronger than a failure theorem
for a feed-forward transformer.  Giving the model recurrence repairs the
topological depth problem, but every classical repair still needs
\(\Omega(\sqrt n)\) peak retained bits at the target approximation ratio.

\subsection{Natural AI task: continual requirements auditing}

Consider an AI assistant supporting a long-running policy, planning, or
engineering process.  The participants introduce requirements one at a time;
the assistant must update its synopsis without retaining or rereading the
whole transcript.  At the end, the user asks for the best-achievable compliance
score: how many of the accumulated requirements can any coherent plan satisfy?

The binary decisions may represent whether to activate a service, approve a
proposal, allocate a team, or adopt a design option.  Typical utterances are:
``If weekend hours are not extended, remote triage must be enabled,'' ``Either
the mobile unit stays onsite or weekend hours are extended,'' and ``Remote
triage and moving the mobile unit offsite may not occur together.''  After
semantic parsing, these become disjunctions of signed binary decisions.  The
user-facing problem is requirements auditing; its formal semantic core is
Max-\(k\)SAT.

\begin{definition}[Continual requirements-audit task]
Fix \(k\geq2\).  A task instance contains \(n\) named binary decisions and a
time-ordered dialogue \(r_1,\ldots,r_T\).  Each requirement utterance \(r_t\)
has a certified semantic parse as a clause \(C_t\) containing at most \(k\)
literals.  Once \(r_t\) has been processed, it is unavailable except through
the solver's retained state.  On the terminal query, the solver outputs a
number \(Z\) estimating
\[
  \operatorname{OPT}(C_{1:T})
  =
  \max_{a\in\{0,1\}^n}
  \bigl|\{t:C_t(a)=1\}\bigr|,
\]
or equivalently the normalized compliance score
\(\operatorname{OPT}(C_{1:T})/T\).
\end{definition}

The theorem-certified version uses a controlled natural-language grammar, so
each requirement can be parsed independently with \(O(\log n)\) workspace.
A benchmark may additionally contain ordinary paraphrases, domain vocabulary,
and coreference, but then semantic-parser error is a separate empirical layer.
The memory theorem already applies to the exactly parseable subset; a quantum
upper bound for the richer surface form additionally assumes a shared online
semantic front end.

\begin{corollary}[Continual requirements-audit separation]
\label{prop:requirements-audit}
For every fixed \(k\geq2\), the controlled-language continual
requirements-audit task admits a one-pass quantum recurrent solver which, with
probability at least \(1-\delta\), outputs \(Z\) satisfying
\[
  \operatorname{OPT}(C_{1:T})
  \geq Z
  \geq 0.7172\,\operatorname{OPT}(C_{1:T})
\]
using
\[
  O\!\left(\log^5 n\log\frac1\delta\right)
\]
qubits of online workspace.  It also uses polylogarithmic classical working
bits, including \(O(\log n)\) exact counters.  Every finite-information
classical recurrent solver attaining that ratio in the same one-pass regime has
\[
  W_\Sigma=\Omega(\sqrt n).
\]
\end{corollary}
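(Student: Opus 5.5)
The plan is to mirror the proof of Corollary~\ref{prop:dynamic-relation-summary}: identify the controlled-language task with the insertion-only Max-$k$SAT clause stream, import the quantum streaming algorithm and the classical streaming lower bound, and then transfer both through Corollary~\ref{thm:online-semantic-lift}.

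\textbf{Step 1 (the compiler).} The controlled grammar renders each clause $C_t$ as one prefix-decodable utterance $r_t$. Entity names and signs take $O(\log n)$ bits per literal and there are at most $k$ literals, so parsing uses $a_n=O(k\log n)=O(\log n)$ workspace for fixed $k$. Nothing is retained between utterances. The decoder maps the terminal numeric answer to itself. Hence the compiler is an exact eventwise compiler in the sense of the online definition. The acceptance relation is preserved exactly: an output $Z$ is accepted iff $\operatorname{OPT}\ge Z\ge 0.7172\,\operatorname{OPT}$ on the parsed clause stream.

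\textbf{Step 2 (the quantum upper bound).} I would invoke the one-pass quantum streaming algorithm of Kallaugher, Parekh, and Voronova~\cite{kpv2024maxdicut}. Their Max-DiCut technique extends to Max-$k$SAT and gives a $0.7172$-approximation (more precisely, ratio $0.7172$ for $k=2$, and at least that for larger $k$ in their analysis). It uses $O(\log^5 n\log(1/\delta))$ qubits together with polylogarithmic classical bits, including the $O(\log n)$-bit exact clause and weight counters. The one-sided form $\operatorname{OPT}\ge Z$ would come from outputting a certified lower estimate, for instance by scaling down the estimator, and I would check that their guarantee is stated in that form. Corollary~\ref{thm:online-semantic-lift} then adds only $a_n+O(\log T)$ classical bits. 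This remains polylogarithmic provided $T\le\mathrm{poly}(n)$, which holds after merging duplicate clauses into counts, or is simply assumed.

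\textbf{Step 3 (the classical lower bound).} I would import the classical streaming lower bound cited in the same work: every one-pass algorithm achieving Max-$k$SAT (indeed Max-2AND/Max-DiCut-type) approximation ratio above the relevant classical threshold requires $\Omega(\sqrt n)$ space. That threshold is the Chou--Golovnev--Velusamy-style $\sqrt2/2\approx0.7071$ barrier for Max-2SAT. Since $0.7172>\sqrt2/2$, substituting $S_{\cl}=\Omega(\sqrt n)$ into Corollary~\ref{thm:online-semantic-lift} gives
\[W_\Sigma\ge\Omega(\sqrt n)-O(\log n)-O(\log T)=\Omega(\sqrt n).\]
Here again $\log T=O(\log n)$ is needed. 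For $k>2$ I would restrict the hard instances to 2-clauses, which is allowed because each clause has \emph{at most} $k$ literals.

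\textbf{Main obstacle.} The delicate step is bookkeeping of the imported results rather than new mathematics. Three things must be verified. First, the quantum algorithm's ratio $0.7172$ must hold for every fixed $k\ge2$, not only for $k=2$. Second, the classical lower-bound threshold must lie strictly below $0.7172$ for the same problem, which the padding-to-2-clauses reduction should give. Third, the bounded-error and $\delta$ conventions must match the error parameter of Corollary~\ref{thm:online-semantic-lift}. I would resolve these first by quoting the precise theorem statements from~\cite{kpv2024maxdicut} and its cited classical lower bound.
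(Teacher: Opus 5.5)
\paragraph{Review.} Your overall route is the paper's route. You treat the controlled grammar as an exact online compiler with $O(\log n)$ workspace, import a one-pass quantum Max-$k$SAT algorithm and the classical $\sqrt2/2$ streaming threshold, and transfer both through Corollary~\ref{thm:online-semantic-lift} using $0.7172>\sqrt2/2$. Steps~1 and~3 are fine. Restricting the hard instances to $2$-clauses is legitimate because requirements have \emph{at most} $k$ literals. Also, since the standard hard distributions use $T=O(n)$ clauses, the $O(\log T)$ loss on the lower-bound side is automatically $O(\log n)$.

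\paragraph{The gap is in Step~2.} This is the load-bearing import, and you take it from the wrong source. Kallaugher--Parekh--Voronova~\cite{kpv2024maxdicut} give a quantum streaming algorithm for Max-DiCut at ratio $0.4844$; that is what Corollary~\ref{prop:dynamic-relation-summary} uses. They give no Max-$k$SAT algorithm.

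Your assertion that ``their Max-DiCut technique extends to Max-$k$SAT and gives a $0.7172$-approximation'' for every fixed $k\ge2$ is exactly the nontrivial content of a separate result, and you leave it unproved. Your hedge, ``$0.7172$ for $k=2$, and at least that for larger $k$,'' is a guess. For the same reason, your plan to settle the bookkeeping ``by quoting the precise theorem statements from~\cite{kpv2024maxdicut}'' would fail, because those statements are not there.

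The paper instead imports everything from Wang and Yang~\cite{wang2026maxksat}:
\begin{itemize}
\item the $0.7172$ ratio, uniformly in fixed $k\ge2$;
\item the $O(\log^5 n\log(1/\delta))$ qubit bound;
\item the polylogarithmic classical control bits, including the $O(\log n)$ exact counters from its preprocessing;
\item the $\sqrt2/2$ classical lower-bound threshold.
\end{itemize}

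The one-sided form $\operatorname{OPT}\ge Z\ge 0.7172\,\operatorname{OPT}$ must likewise come from the imported theorem. Rescaling a two-sided estimator by $(1+\epsilon)^{-1}$ lowers the ratio, so that fix works only if the source constant has slack above $0.7172$. With the citation replaced by~\cite{wang2026maxksat}, your argument coincides with the paper's.
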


\begin{proof}
Wang and Yang give the displayed one-pass quantum streaming algorithm for
Max-\(k\)SAT; its quantum sketches also use polylogarithmic classical control
and working bits, and its preprocessing retains logarithmic exact counters.
The classical streaming lower bound rules out every ratio
strictly larger than \(\sqrt2/2\approx0.7071\) in
\(o(\sqrt n)\) space~\cite{wang2026maxksat}.  The controlled requirement
grammar is a semantics-preserving online compiler with logarithmic workspace,
so Corollary~\ref{thm:online-semantic-lift} preserves both bounds.
\end{proof}

This is a natural AI state-tracking problem in the operational sense used by
Mozer, Siddiqui, and Liu: the accumulated requirement set is an evolving world
state, and its task-sufficient synopsis must be updated as
\(s_t=f(s_{t-1},r_t)\).  The assistant is not asked to recall arbitrary past
sentences or output a quantum object.  It produces one classical planning
diagnostic.  The result is stronger than the observation that a transformer
may lose track of a satisfying assignment: the lower bound ranges over all
bounded-space classical update rules, including recurrent repairs.

The task also has a clear limitation.  It estimates the optimum compliance
value; it does not output the optimizing plan.  The cited quantum streaming
algorithm does not establish a compact quantum advantage for plan
construction, and the present paper does not claim one.

\paragraph{What is imported and what is new.}
The Max-DiCut and Max-\(k\)SAT quantum algorithms, approximation constants,
and classical streaming lower bounds are imported results.  Rewording their
records as sentences does not create a new quantum algorithm.  The new claim
developed here is the general transfer principle and the associated AI task
model: after fixing an online semantic boundary, a streaming lower bound
becomes a lower bound on the peak coordination width of every
finite-information recurrent AI implementation, including the standard
architectural repairs to transformer state tracking.  Continual requirements
auditing supplies a practical planning semantics for that theorem.  A
purported small classical solver must be using uncharged transcript access,
unbounded numerical precision, a weaker output guarantee, or a different
access model.

\paragraph{Finite-size interpretation.}
These theorems establish asymptotic coordination separations, not a practical
memory saving at ordinary LLM scales.  The \(\Omega(\sqrt n)\) bounds hide
constants: at \(n=10^6\), the scaling term \(\sqrt n\) is only \(10^3\),
whereas a 128k-token context over a \(10^5\)-word vocabulary can carry about
\(2.1\times10^6\) raw token-index bits before counting the physical KV cache.
Likewise, the explicit stabilizer expression below is about
\(5.0\times10^3\) bits
at \(n=100\) qubits.  Moreover, an \(O(\log^5 n)\) quantum upper bound need
not beat \(\sqrt n\) at moderate \(n\), especially after constants,
fault-tolerance, and interface costs are included.  No finite-size crossover
or practical quantum-memory advantage is claimed here.

\section{Contextual Quantum-AI Application}
\label{sec:contextual-benchmark}

The previous sections explain what kind of benchmark is needed.  Ordinary LSP
tasks hide a classical variable and ask later questions about it.  Such tasks
are useful probes of state tracking, but a classical recurrent model can repair
them by storing a sufficient state.  To obtain a candidate quantum separation,
the hidden state should instead be queried through incompatible contexts.

\begin{definition}[Contextual LSP benchmark]
For each size parameter \(n\), a contextual latent-state-persistence benchmark
consists of:
\[
\begin{array}{rcl}
\mathcal H_n &:& \text{allowed histories},\\
\mathcal C_n &:& \text{allowed query contexts},\\
\{\rho_h:h\in\mathcal H_n\} &:& \text{latent states prepared by histories},\\
\{M^c_o:o\in O_c\}_{c\in\mathcal C_n}
&:& \text{query-dependent output measurements},\\
\{\mathcal E_{c,o}\}_{c,o}
&:& \text{state-update instruments}.
\end{array}
\]
At test time the benchmark presents a history \(h_t\) and a query context
\(c_t\).  The target conditional distribution is
\[
  P_n(o_t\mid h_t,c_t)
  =
  \Tr(M^{c_t}_{o_t}\rho_{h_t}),
\]
and after observing \(o_t\) the latent state updates as
\[
  \rho_{h_{t+1}}
  =
  \mathcal E_{c_t,o_t}(\rho_{h_t}).
\]
A model is evaluated by the average total-variation distance, log loss, or
success probability of its conditional predictions over an adaptive sequence
of histories and queries.
\end{definition}

The intended boundary is the time cut after \(h_t\) has been processed but
before \(c_t\) is revealed.  If the complete history is re-supplied together
with the query, then a classical model may recompute the latent state from the
raw transcript; in the present accounting that repair is charged to local
depth \(D\), not treated as free state tracking.

This definition contains ordinary LSP as the jointly classical special case.
If all states and measurements are jointly diagonal in a common basis, then
there is a classical sufficient variable \(z_t\) and
Proposition~\ref{prop:ordinary-repair} applies.
The task becomes contextual in the operational sense used here only when the
same history can later be queried in contexts that do not admit a small common
response chart.

\paragraph{From LSP to contextual LSP.}
The operational modification is small.  In an ordinary hidden-state benchmark,
the history prepares a latent variable and later questions ask for facts about
that variable.  In a contextual benchmark, the history prepares a latent
object and later questions choose one of several incompatible tests of that
object.  A language wrapper could describe the history as a lab notebook,
simulation trace, symbolic circuit, or world-state update; the mathematical
core is that \(c_t\) is not just a request for a stored fact, but a measurement
context.

The resulting classical repair options are still allowed.  A classical model
may store a chart in recurrent memory, write intermediate chart data into a
scratchpad, or recompute a chart after seeing the query.  The point is that
these repairs now have visible costs:
\[
  \text{stored chart data}\mapsto M,\qquad
  \text{written chart data}\mapsto B,\qquad
  \text{reconstructed chart data}\mapsto D.
\]

\begin{definition}[Stabilizer contextual LSP]
The stabilizer contextual-LSP family \(P^{\mathrm{stab}}_n\), the benchmark
version of the quantum-memory seed imported from
Ref.~\cite{yang2026coordination}, is obtained by taking
\(\mathcal H_n\) to be histories of Clifford gates and previous Pauli
measurement outcomes on \(n\) qubits.  Each history prepares an \(n\)-qubit
stabilizer state \(\rho_h\).  A query context \(c\in\mathcal C_n\) is a
commuting family of Pauli observables, the output \(o\) is the corresponding
string of measurement outcomes, and the update map is the usual stabilizer
measurement update.
\end{definition}

\begin{definition}[Semantic stabilizer dialogue]
A semantic stabilizer dialogue is a natural-language or symbolic-language
presentation of \(P^{\mathrm{stab}}_n\).  The transcript describes Clifford
updates and previous Pauli measurement outcomes using an unambiguous finite
grammar; the next prompt describes a commuting Pauli context; and the required
answer is the corresponding outcome distribution or a sample from it.  The
semantic target process is still \(P^{\mathrm{stab}}_n\); the text wrapper
only supplies a state-tracking interface of the kind used in transformer
state-tracking benchmarks.
\end{definition}

\begin{definition}[Adaptive-complete recurrent simulation]
A classical recurrent implementation of a contextual-LSP family is
\emph{adaptive-complete} if, for every finite adaptive policy that chooses the
next query context as a function of the previous history and outcomes, the
implementation reproduces the joint distribution of the full transcript.  The
boundary state at time \(t\) consists of the retained recurrent state together
with any explicit transcript crossing the chosen boundary \(\Sigma_t\).  Thus
an implementation using resources \((B,M)\) has at most \(2^{B+M}\) effective
boundary states at each cut.
\end{definition}

\paragraph{Relation to the genuine-global construction.}
Reference~\cite{yang2026coordination} contains two logically distinct steps.
Its finite-causal-witness lemma is already a single-system statement about the
\(n\)-qubit stabilizer seed: it constructs the finite adaptive interface
\(W_n\) and proves the causal-state lower bound used below.  A subsequent,
optional flag lift embeds that seed into a genuinely global multipartite model
and transfers the same cost by conditioning on the flag.  The present LSP
benchmark imports only the first step.  It therefore needs no \(k=1\) or
single-party reduction from the genuinely global theorem, and its quantum
upper bound remains the \(n\)-qubit seed realization.  Applying the separate
flag lift would instead produce a genuinely global restriction with a small
additional flag-memory overhead, but that extra physical structure is not
used in the AI state-tracking claim.

\begin{lemma}[Imported finite causal stabilizer witness]
\label{prop:finite-causal-witness}
For every \(n\geq2\), there is a finite adaptive interface
\(W_n\subset P_n^{\mathrm{stab}}\) such that every exact finite-state
classical causal online realization of \(W_n\) has at least
\[
  K_n
  \geq
  \frac{2^n\prod_{j=1}^{n}(2^j+1)}
       {5\cdot3^{n-2}}
\]
boundary states.  Consequently, every exact adaptive-complete classical
recurrent implementation of \(P_n^{\mathrm{stab}}\) satisfies
\[
  B+M\geq\log_2K_n .
\]
\end{lemma}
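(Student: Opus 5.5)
The lemma has two parts: an imported counting bound and a consequence for boundary resources. We take them in that order.

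For the first part, the plan is to reproduce the structure of the finite-causal-witness argument of Ref.~\cite{yang2026coordination} rather than prove anything new. We fix the interface \(W_n\) as follows. A finite set of Clifford-and-measurement histories prepares every \(n\)-qubit stabilizer state. There are exactly \(2^n\prod_{j=1}^n(2^j+1)\) such states, and this is the numerator of the bound. After each history we allow a finite family of adaptive follow-up tests, namely commuting Pauli contexts. These tests are of two kinds:
\begin{itemize}
\item \emph{distinguishing tests}, which separate any two distinct stabilizer states with nonzero probability of differing transcripts;
\item \emph{partitioning tests}, which force deterministic outcomes on states sharing a stabilizer element.
\end{itemize}
Given an exact finite-state causal realization \((w,R,U)\), each history induces a distribution \(w_h\) over boundary states. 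Exactness on deterministic queries forces every boundary state in \(\operatorname{supp} w_h\) to give the deterministic answers of \(\rho_h\) for the Paulis stabilizing \(\rho_h\). Applying the measurement-update kernel \(U\) and then a subsequent test makes this propagate: a single boundary state \(\lambda\) can lie in the supports of several histories only if the corresponding stabilizer states are pairwise compatible on every adaptively reachable deterministic test. This reproduces the Kochen--Specker/KWB overlap condition, but it is derived from the causal kernels rather than assumed.

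The key counting step bounds how many stabilizer states can share one boundary state. The plan is to show that any \(\lambda\) is compatible with at most \(5\cdot 3^{n-2}\) stabilizer states. Since each state must have a nonempty support of boundary states, summing over states then gives \(K_n\ge |\mathrm{Stab}_n|/(5\cdot3^{n-2})\). The bound \(5\cdot 3^{n-2}\) should come by induction on \(n\):
\begin{itemize}
\item \emph{Base case \(n=2\).} A direct Peres--Mermin-type analysis shows that at most \(5\) two-qubit stabilizer states can share a noncontextual-consistent response profile.
\item \emph{Inductive step.} Each additional qubit contributes a factor of at most \(3\), because a single-qubit response chart agrees with at most \(3\) of the \(6\) single-qubit stabilizer states (one eigenstate per Pauli axis). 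The adaptive witness measures qubit by qubit, and a partitioning test reduces to the \((n-1)\)-qubit case conditioned on the outcome.
\end{itemize}
I expect this inductive overlap count to be the main obstacle. It has to be done for general causal realizations, where \(U\) may randomize and depend on the context, and not for a fixed ontological model. The first attempt would define the ``profile'' of \(\lambda\) as the set of histories whose exact transcript distributions remain reproducible from \(\lambda\), and then show that this profile is closed under the witness's adaptive tests.

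The consequence is then immediate. An exact adaptive-complete recurrent implementation of \(P_n^{\mathrm{stab}}\) reproduces every finite adaptive policy, and in particular those in \(W_n\). Its boundary states, which combine the retained recurrent state with any transcript crossing \(\Sigma_t\), therefore form an exact finite-state causal realization of \(W_n\). By the adaptive-complete definition it has at most \(2^{B+M}\) effective boundary states. Proposition~\ref{prop:contextual-lower-bound}, applied with \(K=K_n\), gives \(B+M\ge\log_2K_n\).
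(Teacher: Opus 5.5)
Your double counting, \(K_n\cdot 5\cdot3^{n-2}\ge 2^n\prod_{j=1}^n(2^j+1)\), and your derivation of \(B+M\ge\log_2K_n\) from adaptive completeness agree with the paper. The gap is the step that carries all the content: that one boundary state lies in the support of at most \(m_n=5\cdot3^{n-2}\) preparations. You leave this as an unproved induction, and that induction is not the right route.

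The paper imports a purely state-level theorem of Karanjai--Wallman--Bartlett: every set of more than \(m_n\) pure \(n\)-qubit stabilizer states admits a stabilizer partitioning measurement. This is a measurement such that, after each relevant outcome, two members of the set have orthogonal post-measurement states. The causal part is then a single step. Suppose \(\lambda\) were in the support of \(m_n+1\) preparations. The one kernel at \(\lambda\) gives some outcome and successor \(\lambda'\) positive probability, and that outcome and successor are shared by all of those preparations. Two of them therefore reach orthogonal post-measurement states through \(\lambda'\). A single-shot test distinguishing that pair then demands two different certain outcomes from the one kernel at \(\lambda'\), which is a contradiction. Randomized or context-dependent \(U\) is harmless here, so the obstacle you anticipate disappears. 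No profile-closure argument and no induction over causal realizations are needed.

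Your sketch does not substitute for the imported theorem.
\begin{itemize}
\item ``A single-qubit response chart agrees with at most \(3\) of \(6\) states'' is a noncontextual value-assignment fact. A causal kernel may answer the same Pauli differently in different contexts.
\item Most \(n\)-qubit stabilizer states are entangled and have no deterministic single-qubit Pauli. Qubit-by-qubit conditioning therefore does not sort them into single-qubit classes. The number of stabilizer states grows by a factor \(2(2^n+1)\) from \(n-1\) to \(n\) qubits, so there is no product structure to induct on.
\item The Peres--Mermin base case of \(5\) is only asserted.
\end{itemize}

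Your tests are also specified too weakly. A distinguishing test that merely gives differing transcripts with nonzero probability yields no contradiction at a shared boundary state. You need certain, different outcomes on an orthogonal pair.

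Likewise, the compatibility condition must be imposed on the whole subset through the common outcome and successor selected at \(\lambda\), not pairwise. A pairwise condition, even with adaptive follow-ups, collapses to non-orthogonality. Indeed, \(\sum_k\langle\psi|\Pi_k|\phi\rangle=\langle\psi|\phi\rangle\), so some outcome always keeps a non-orthogonal pair non-orthogonal. The \(3^n\) product states in \(\{\ket{0},\ket{+},\ket{+i}\}^{\otimes n}\) are pairwise non-orthogonal, and \(3^n\) already exceeds \(5\cdot3^{n-2}\).
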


\begin{proof}
Karanjai--Wallman--Bartlett show that every set of more than
\(m_n=5\cdot3^{n-2}\) pure \(n\)-qubit stabilizer states admits a stabilizer
partitioning measurement~\cite{karanjai2018contextuality}.  For every subset
of \(m_n+1\) preparations, include one such measurement and, after each
relevant outcome, one allowed single-shot test distinguishing the resulting
orthogonal pair.  The stabilizer preparation and measurement sets are finite
for fixed \(n\), so their union defines a finite interface \(W_n\).

If one causal boundary state occurred with positive probability after all
preparations in one of these subsets, its common response kernel would assign
positive probability to some partitioning outcome and successor state.  Two
preparations would then reach orthogonal postmeasurement records through that
same successor state, while the subsequent distinguishing test requires
different certain outcomes, a contradiction.  Thus one causal state can occur
in the support of at most \(m_n\) pure preparations.  There are
\(2^n\prod_{j=1}^{n}(2^j+1)\) pure stabilizer states, giving the stated
count.  This finite-witness upgrade from the KWB overlap theorem to arbitrary
finite-state causal realizations is exactly the single-system
``finite causal witness from the stabilizer overlap bound'' lemma of
Ref.~\cite{yang2026coordination}, restated here to make the reduction
self-contained.  It precedes, and does not rely on, that reference's
genuinely global flag-lift theorem.

An exact adaptive-complete implementation of the full process remains exact
when restricted to \(W_n\).  Its complete future-accessible transcript and
retained state have at most \(2^{B+M}\) values, so
\(2^{B+M}\geq K_n\).
\end{proof}

\begin{corollary}[Imported stabilizer latent-state separation]
\label{thm:stabilizer-lsp}
The family \(P^{\mathrm{stab}}_n\) is exactly generated by a quantum recurrent
model using \(n\) qubits of latent memory.  Any exact adaptive-complete
finite-state classical causal recurrent implementation, with all
future-accessible boundary information counted, satisfies
\[
  B+M=\Omega(n^2).
\]
More explicitly, for \(n\ge 2\),
\[
  B+M
  \ge
  \log_2
  \left(
    \frac{2^n\prod_{j=1}^{n}(2^j+1)}
         {5\cdot 3^{n-2}}
  \right)
  =
  \frac12n^2+
  \left(\frac32-\log_2 3\right)n+O(1)
  =
  \Omega(n^2).
\]
\end{corollary}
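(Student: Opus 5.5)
The proof has two parts: a quantum upper bound by direct construction, and a classical lower bound from Lemma~\ref{prop:finite-causal-witness} followed by an asymptotic evaluation of $\log_2 K_n$.

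\textbf{Quantum upper bound.} Let the recurrent latent state be the $n$-qubit stabilizer state $\rho_h$ itself.
\begin{enumerate}[leftmargin=2em]
\item Each Clifford instruction in the transcript is applied as a unitary channel on $\rho_h$.
\item Each query context $c$ is a commuting Pauli family. Measuring it jointly gives a projective measurement whose outcome statistics are exactly $\Tr(M^c_o\rho_h)$.
\item After the measurement, the post-measurement state is the standard stabilizer update $\mathcal E_{c,o}$.
\end{enumerate}
So the quantum model reproduces $P^{\mathrm{stab}}_n$ exactly, and the only retained memory is $n$ qubits. The parser for the fixed grammar needs at most $O(\log n)$ classical workspace, which affects only the classical accounting.

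\textbf{Classical lower bound.} Lemma~\ref{prop:finite-causal-witness} gives $B+M\ge\log_2K_n$ directly. This applies because any exact adaptive-complete implementation of $P^{\mathrm{stab}}_n$ stays exact when restricted to the finite interface $W_n$. What remains is to evaluate
\[
\log_2 K_n = n+\sum_{j=1}^n\log_2(2^j+1)-\log_2 5-(n-2)\log_2 3 .
\]
\begin{enumerate}[leftmargin=2em]
\item Write $\log_2(2^j+1)=j+\log_2(1+2^{-j})$. The sum of the $j$ terms is $n(n+1)/2$.
\item The correction terms satisfy $\log_2(1+2^{-j})\le 2^{-j}/\ln 2$. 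Their series therefore converges to a constant bounded by $1/\ln 2$, and truncating at $n$ contributes only $O(1)$.
\item Collecting terms gives
\[
\log_2K_n=\tfrac12n^2+\tfrac12n+n-n\log_2 3+O(1)=\tfrac12n^2+\bigl(\tfrac32-\log_23\bigr)n+O(1),
\]
where the constant $2\log_23-\log_25$ has been absorbed into $O(1)$. This expression is $\Omega(n^2)$.
\end{enumerate}

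\textbf{Main subtlety.} The calculation is routine. The step that needs care is confirming that the hypotheses of Lemma~\ref{prop:finite-causal-witness} are met. All future-accessible information must be counted in the boundary state: the retained memory together with any explicit transcript crossing $\Sigma_t$. With that accounting, the effective state count at each cut is at most $2^{B+M}$, which is what the lemma requires. Exactness is also essential, since the support-counting argument does not survive approximate simulation. For this reason both exactness and the full accounting of future-accessible information belong in the statement's hypotheses.
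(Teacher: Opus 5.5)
Your proposal is correct and follows the paper's proof essentially step for step. For the upper bound you store the physical $n$-qubit stabilizer state and apply the requested Clifford or Pauli-measurement updates; for the lower bound you restrict an exact adaptive-complete classical implementation to the finite interface $W_n$, invoke Lemma~\ref{prop:finite-causal-witness} to get $B+M\ge\log_2K_n$, and expand $\log_2(2^j+1)=j+\log_2(1+2^{-j})$, additionally making explicit the bound $\sum_j\log_2(1+2^{-j})\le 1/\ln 2$ and the constant $2\log_2 3-\log_2 5$ that the paper leaves implicit.
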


\begin{proof}
The quantum implementation stores the physical \(n\)-qubit stabilizer state
and applies the requested Clifford or Pauli-measurement update.  Restricting an
exact adaptive-complete classical implementation to the finite interface
\(W_n\) preserves its boundary-state set.  Lemma
\ref{prop:finite-causal-witness} therefore gives the displayed ratio and its
logarithm.  Expanding
\(\log_2(2^j+1)=j+\log_2(1+2^{-j})\) gives the quadratic asymptotic.
\end{proof}

The factor in this count is \(2^j+1\), exponential in \(j\).  As a
low-dimensional check, the formula gives \(2^3(3)(5)(9)=1080\) pure
stabilizer states at \(n=3\); this factor is what produces the quadratic
logarithmic growth.

\begin{corollary}[Contextual quantum-AI state-tracking separation]
\label{prop:text-wrapped-stabilizer}
Relative to the boundary after the transcript has been semantically processed
and before the next query context is revealed, the text-wrapped stabilizer
dialogue is generated by a quantum recurrent model with \(n\) qubits of
latent memory.  Any exact adaptive-complete finite-state classical causal
recurrent solver, with all future-accessible semantic boundary information
counted, satisfies
\[
  B+M=\Omega(n^2).
\]
\end{corollary}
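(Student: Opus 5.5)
The plan is to obtain Corollary~\ref{prop:text-wrapped-stabilizer} as the exact, adaptive-causal specialization of Theorem~\ref{thm:semantic-coordination-transfer}, with source task \(\Pi_n=P^{\mathrm{stab}}_n\) (or its finite witness \(W_n\)) and the unambiguous finite grammar of the semantic stabilizer dialogue as the compiler. First I will check that this grammar is a boundary-preserving semantic compiler with \(\eta_n=0\). Each Clifford gate, Pauli measurement outcome, and commuting Pauli context is rendered as its own block, in source order. The adaptive choice of the next context is inherited from the source. Once a block has been processed it is never re-supplied. Parsing one block needs only \(O(n)\) bits of current-block workspace (a Pauli string, or a gate name with \(O(\log n)\) qubit indices), and no instance-dependent compiler record survives a boundary. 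So this is an exact eventwise compiler that retains no additional instance-dependent state, which is exactly the case of the final clause of the theorem.

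\textbf{Quantum upper bound.} I will apply part~2 of Theorem~\ref{thm:semantic-coordination-transfer}. The source generator is the \(n\)-qubit stabilizer register from Corollary~\ref{thm:stabilizer-lsp}. The semantic solver parses each block, applies the decoded Clifford gate or Pauli-measurement instrument to the register, and renders the outcome string as text. Because the parser holds only the current block, the latent memory carried across each cut is just the \(n\) qubits.

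\textbf{Classical lower bound.} I will use the composition direction of the theorem's proof. Take any exact adaptive-complete classical semantic solver and precompose it with the online renderer and parser, and postcompose it with the output decoder. The result is an exact adaptive-complete classical causal realization of \(P^{\mathrm{stab}}_n\). Exactness and adaptive completeness carry over because the compiler is exact and order-preserving. Its boundary state at each cut is the semantic solver's future-accessible state, which the statement says is fully counted in \(B+M\). Restricting this realization to \(W_n\) and invoking Lemma~\ref{prop:finite-causal-witness} gives \(2^{B+M}\geq K_n\). The computation in Corollary~\ref{thm:stabilizer-lsp} then gives \(B+M\geq \tfrac12 n^2+O(n)=\Omega(n^2)\).

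\textbf{Main obstacle: the event counter.} The generic theorem loses up to \(O(\log T_n)\) bits for an event counter and up to \(a_n\) bits for compiler workspace, and neither is obviously \(O(1)\). I will argue the loss is harmless in either case. The interface \(W_n\) is finite for fixed \(n\), so its transcripts have bounded length. Either the counter is unnecessary, because the exact compiler keeps the boundary-state sets themselves as in the final clause, or its cost is \(O(\log T_n)\) with \(T_n\) finite. Likewise the \(O(n)\) parser workspace is transient and not retained across boundaries. In both cases the loss is at most \(o(n^2)\) and cannot cancel the quadratic term, so the \(\Omega(n^2)\) conclusion survives.
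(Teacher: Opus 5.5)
Your proposal is correct and follows the paper's proof: you treat the controlled grammar as an exact, non-retaining boundary-preserving compiler, obtain the $n$-qubit upper bound by applying each decoded update to the physical register, and obtain the classical bound by composing with the compiler, restricting to $W_n$, and invoking Lemma~\ref{prop:finite-causal-witness} through Theorem~\ref{thm:semantic-coordination-transfer}. Your first branch on the overhead, the theorem's exact, no-retained-state clause, is what the paper relies on; your alternative branch would need $\log T_n=o(n^2)$, not merely $T_n<\infty$, though that does hold because each stabilizer preparation in the witness needs only $O(n^2)$ Clifford gates.
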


\begin{proof}
The controlled grammar is an exact boundary-preserving semantic compiler:
removing the linguistic surface leaves the stabilizer contextual-LSP process
\(P^{\mathrm{stab}}_n\), each current block is parsed with \(O(\log n)\)
workspace, and no parser record survives the semantic boundary.  A quantum
solver applies each decoded update to the physical \(n\)-qubit latent state.
Theorem~\ref{thm:semantic-coordination-transfer}, applied to the imported
finite-state lower bound in
Lemma~\ref{prop:finite-causal-witness}, gives the stated
\(\Omega(n^2)\) classical bound.
\end{proof}

\begin{remark}
This corollary is the sense in which an NLP-style task can carry a provable
quantum coordination advantage.  The advantage belongs to the noncommuting
state-tracking problem preserved by the text, not to ordinary natural
language understanding by itself.  If the full raw transcript is re-supplied
and a classical model recomputes a stabilizer tableau after each prompt, the
stored or retransmitted transcript is charged to \(B+M\) and its processing to
\(D\); neither resource is free.
\end{remark}

\begin{remark}
This benchmark is intentionally quantum-native.  Rephrasing it as natural
language does not make the advantage a property of ordinary language modeling;
it only gives a user-facing wrapper around a noncommuting state-tracking task.
Its value is that it isolates the exact technical target: prove the reduction
from an adaptive state-tracking interface to causal boundary-state complexity.
The finite witness above achieves that reduction exactly; robust approximate
and natural-task versions remain open.
\end{remark}

\paragraph{Scope of the transferred bound.}
The displayed \(\Omega(n^2)\) separation applies to every exact finite-state
classical causal online realization of the adaptive-complete interface; it is
not restricted to a preselected chart or neural architecture.  The finite
witness is what upgrades the KWB overlap count to this general causal class.
The theorem still does not cover constant-error approximation, a batch
algorithm with free random access to the full transcript, or uncounted
infinite-precision state.  Those are different approximation or access models.

\paragraph{Why adaptive completeness matters.}
A one-step benchmark distribution is weaker than a process simulator.  A
model might predict the next answer well on a fixed distribution of histories
without carrying enough state to answer all future compatible queries.  The
stabilizer lower-bound route therefore needs an adaptive benchmark: after any
history that the model itself has helped generate, an evaluator may choose a
new commuting Pauli context and continue.  This is the operational content of
state tracking, and it is what turns a conditional prediction benchmark into a
candidate memory lower-bound problem.

\paragraph{Evaluation protocol.}
An evaluator can implement adaptive completeness without inspecting the
model's internal state.  At the start of each trial it resets the model,
supplies an allowed preparation-and-update history, and then selects each next
commuting Pauli context as a function of the full interaction transcript.  It
records the model's response, applies the corresponding target-process update,
and continues for a prescribed horizon.  Repeated trials estimate the joint
distribution of complete adaptive trajectories, which is compared with the
target process in total variation or log loss.  The policy family must include
context choices that distinguish histories merged by a candidate simulator;
a fixed i.i.d. test set does not provide this guarantee.  Constructing an
efficient worst-case policy, or a finite certificate that is complete for a
given model class, remains an open algorithmic problem.

\paragraph{Classical-baseline audit.}
The logical strength of each result is summarized below.  Here
\emph{general} always means general within the displayed one-way, one-pass, or
causal access model, not a batch algorithm with free access to the complete
input.
\begin{center}
\small
\begin{tabular}{p{0.24\textwidth}p{0.43\textwidth}p{0.19\textwidth}}
\hline
Result & Classical baseline & Status \\
\hline
State-count and separation criteria
& The explicitly named class \(\mathcal C\)
& Conditional on \(\mathcal C\) \\
Semantic coordination transfer
& The named source access model, with all compiler state charged
& General transfer \\
Synopsis and matched-entity QA
& Arbitrary randomized one-way protocol
& General one-way \\
Online semantic lift and its two tasks
& Arbitrary randomized one-pass finite-information update rule
& General one-pass \\
Strong \(k\)-contextuality of Teo et al.
& Finite-state HMM or finite-precision autoregressive realization
& Interface-limited \\
Imported stabilizer witness
& Arbitrary exact finite-state causal online realization
& Ref.~\cite{yang2026coordination}, exact \\
Compiled contextual AI dialogue
& Arbitrary exact finite-state causal online realization
& General causal transfer \\
\hline
\end{tabular}
\end{center}

\section{Discussion: Calibrations and Extensions}

The task families above play three different roles and should not be read as
coequal headline contributions.

\begin{enumerate}[leftmargin=2em]
\item \textbf{Classical-looking calibrations.}  Matched-entity QA and continual
requirements auditing instantiate established one-way or streaming
separations inside ordinary database and planning semantics.  They validate
the boundary accounting and the semantic lifts, but their underlying quantum
algorithms and lower bounds are imported.  In particular, the
requirements-audit advantage comes from streaming Max-\(k\)SAT and is not
claimed to arise from contextuality.
\item \textbf{Quantum-native compiler test.}  Stabilizer measurement tracking
imports the clean finite witness of Ref.~\cite{yang2026coordination}.  The new
role it plays here is to test whether a semantic state-tracking interface
preserves an adaptive causal-state separation.  Its limitations are explicit:
adaptive completeness is required, the source theorem is exact, and the
semantics remain quantum-native unless a certified wrapper is used.
\item \textbf{Extensions toward natural processes.}  Quantum
predictive-memory models provide classical stochastic processes with compact
quantum representations~\cite{gu2012occam,garner2017memory}.  Contextual
interactive dialogues, scientific time series, and embodied-agent histories
are further targets, but difficult state tracking alone does not imply
noncommutativity.  A convincing extension must identify a concrete query
family with a large classical causal-state lower bound and a compact quantum
realization.
\end{enumerate}

\section{Open Problems}

\begin{enumerate}[leftmargin=2em]
\item \textbf{Foundational robustness.}  Identify equivalence classes of
computational boundaries under which the \(B,M,D\) region is stable, and
separate representation or simulation cost from the cost of learning a
generator.  Without these distinctions, a lower bound may partly reflect
bookkeeping or training assumptions rather than an intrinsic coordination
obstruction.
\item \textbf{Robust and efficient witnesses.}  Extend the exact stabilizer
theorem to \(\epsilon\)-approximate causal simulation and replace the
potentially large finite witness by efficient adaptive evaluator policies and
statistical certificates.  Semantic wrappers must preserve the same complete
causal interface, including gates, measurements, randomness, and updates.
\item \textbf{Finite and physical advantage.}  Determine constants, crossover
scales, matching classical upper bounds, and actual context or KV-cache costs.
A physical comparison must also include coherence time, error correction,
refresh, and classical--quantum input/output overhead over the full stream.
\item \textbf{Natural tasks and outputs.}  Find noncommuting latent-state
processes closer to practical generation while retaining provable causal-state
lower bounds.  This includes classical-looking interactive benchmarks,
natural-language variants with controlled parser error, and strengthening the
requirements-audit result from value estimation to construction of an
approximately optimal plan.
\end{enumerate}

\section*{Acknowledgements}

The author acknowledges the use of AI-assisted tools during the preparation of
this draft for literature exploration, mathematical checking, language
polishing, and organization.  The author is solely responsible for all claims,
interpretations, and conclusions.


\begin{thebibliography}{99}

\bibitem{huang2025lsp}
J.-t. Huang, K. Sun, W. Wang, and M. Dredze.
\newblock On the failure of latent state persistence in large language models.
\newblock arXiv:2505.10571, 2025.

\bibitem{cui2026topological}
M. C. Mozer, S. A. Siddiqui, and R. Liu.
\newblock The topological trouble with transformers.
\newblock arXiv:2604.17121, 2026.

\bibitem{yang2026coordination}
M. Yang.
\newblock Genuine global Kochen--Specker contextuality as classical
coordination cost.
\newblock arXiv:2606.23577, 2026.

\bibitem{gao2022generative}
X. Gao, E. R. Anschuetz, S.-T. Wang, J. I. Cirac, and M. D. Lukin.
\newblock Enhancing generative models via quantum correlations.
\newblock \emph{Physical Review X} 12, 021037 (2022).

\bibitem{gili2024metrics}
K. Gili, M. Mauri, and A. Perdomo-Ortiz.
\newblock Generalization metrics for practical quantum advantage in generative
models.
\newblock \emph{Physical Review Applied} 21, 044032 (2024).
\newblock doi:10.1103/PhysRevApplied.21.044032.

\bibitem{gu2012occam}
M. Gu, K. Wiesner, E. Rieper, and V. Vedral.
\newblock Quantum mechanics can reduce the complexity of classical models.
\newblock \emph{Nature Communications} 3, 762 (2012).

\bibitem{garner2017memory}
A. J. P. Garner, Q. Liu, J. Thompson, V. Vedral, and M. Gu.
\newblock Provably unbounded memory advantage in stochastic simulation using
quantum mechanics.
\newblock \emph{New Journal of Physics} 19, 103009 (2017).

\bibitem{kleinmann2011memory}
M. Kleinmann, O. G{\"u}hne, J. R. Portillo, J.-{\AA}. Larsson, and A. Cabello.
\newblock Memory cost of quantum contextuality.
\newblock \emph{New Journal of Physics} 13, 113011 (2011).

\bibitem{fagundes2017memory}
G. Fagundes and M. Kleinmann.
\newblock Memory cost for simulating all quantum correlations of the
Peres--Mermin scenario.
\newblock \emph{Journal of Physics A: Mathematical and Theoretical} 50,
325302 (2017).

\bibitem{karanjai2018contextuality}
A. Karanjai, J. J. Wallman, and S. D. Bartlett.
\newblock Contextuality bounds the efficiency of classical simulation of
quantum processes.
\newblock arXiv:1802.07744.

\bibitem{prakash2026memory}
S. Prakash.
\newblock Quantum memory advantage from contextuality.
\newblock arXiv:2607.00507, 2026.

\bibitem{li2025state}
B. Z. Li, Z. C. Guo, and J. Andreas.
\newblock (How) do language models track state?
\newblock In \emph{Proceedings of the 42nd International Conference on
Machine Learning}, PMLR 267, 34429--34452 (2025).

\bibitem{merrill2024illusion}
W. Merrill, J. Petty, and A. Sabharwal.
\newblock The illusion of state in state-space models.
\newblock In \emph{Proceedings of the 41st International Conference on
Machine Learning}, PMLR 235, 35492--35506 (2024).

\bibitem{peng2024limitations}
B. Peng, S. Narayanan, and C. Papadimitriou.
\newblock On limitations of the transformer architecture.
\newblock arXiv:2402.08164, 2024.

\bibitem{teo2025kcontextuality}
M. H. Teo, W. Yang, J. Sud, T. Tomesh, F. T. Chong, and E. R. Anschuetz.
\newblock \(k\)-Contextuality as a heuristic for memory separations in
learning.
\newblock In \emph{2025 IEEE International Conference on Quantum Computing
and Engineering (QCE)}, 2399 (2025).
\newblock doi:10.1109/QCE65121.2025.00260.

\bibitem{vaswani2017attention}
A. Vaswani et al.
\newblock Attention is all you need.
\newblock In \emph{Advances in Neural Information Processing Systems}, 2017.

\bibitem{nayak1999randomaccess}
A. Nayak.
\newblock Optimal lower bounds for quantum automata and random access codes.
\newblock In \emph{Proceedings of the 40th Annual Symposium on Foundations of
Computer Science}, 369--376 (1999).

\bibitem{fawzi2015psdrank}
H. Fawzi, J. Gouveia, P. A. Parrilo, R. Z. Robinson, and R. R. Thomas.
\newblock Positive semidefinite rank.
\newblock \emph{Mathematical Programming} 153, 133--177 (2015).

\bibitem{baryossef2004hiddenmatching}
Z. Bar-Yossef, T. S. Jayram, and I. Kerenidis.
\newblock Exponential separation of quantum and classical one-way communication
complexity.
\newblock In \emph{Proceedings of the 36th Annual ACM Symposium on Theory of
Computing}, 128--137 (2004).
\newblock doi:10.1145/1007352.1007379.

\bibitem{kerenidis2006booleanhiddenmatching}
I. Kerenidis and R. Raz.
\newblock The one-way communication complexity of the Boolean hidden matching
problem.
\newblock arXiv:quant-ph/0607173.

\bibitem{abramsky2011sheaf}
S. Abramsky and A. Brandenburger.
\newblock The sheaf-theoretic structure of non-locality and contextuality.
\newblock \emph{New Journal of Physics} 13, 113036 (2011).

\bibitem{abramsky2012databases}
S. Abramsky.
\newblock Relational databases and Bell's theorem.
\newblock arXiv:1208.6416.

\bibitem{kpv2024maxdicut}
J. Kallaugher, O. Parekh, and N. Voronova.
\newblock Exponential quantum space advantage for approximating maximum
directed cut in the streaming model.
\newblock In \emph{Proceedings of the 56th Annual ACM Symposium on Theory of
Computing}, 1805--1815 (2024).
\newblock doi:10.1145/3618260.3649709.

\bibitem{wang2026maxksat}
H. Wang and G. Yang.
\newblock Exponential quantum space advantage for approximating
Max-\(k\)SAT in the streaming setting.
\newblock arXiv:2606.05366, 2026.

\end{thebibliography}
\end{document}